\newtheorem{theorem}{Theorem}
\newtheorem{lemma}[theorem]{Lemma}
\newtheorem*{definition}{Definition}
\newcommand{\eps}{\ensuremath{\varepsilon}}
\newcommand{\ours}{{\sc MRG}\xspace}
\newcommand{\ene}{{\sc EIM}\xspace}
\newcommand{\gon}{{\sc Gon}\xspace}
\newcommand{\unif}{{\sc Unif}\xspace}
\newcommand{\gauss}{{\sc Gau}\xspace}
\newcommand{\unbalanced}{{\sc UnB}\xspace}
\newcommand{\poker}{{\sc Poker Hand}\xspace}
\newcommand{\kdd}{{\sc KDD Cup}\xspace}
\newcommand{\OPT}{\ensuremath{\mathit{OPT}}}
\begin{document}

\title{\Large Efficient Parallel Algorithms for $k$-Center Clustering}

\author{Jessica McClintock\\
{Department of Computing and Information Systems}\\
{The University of Melbourne}\\
{jmcclintock@unimelb.edu.au}
\and 
Anthony Wirth\\
{Department of Computing and Information Systems}\\
{The University of Melbourne}\\
awirth@unimelb.edu.au
}
\date{}
\maketitle

\begin{abstract} 
The $k$-center problem is one of several classic NP-hard clustering questions.
For contemporary massive data sets, RAM-based algorithms become impractical.
And although there exist good sequential algorithms for $k$-center, they are not easily parallelizable.

In this paper, we design and implement parallel approximation algorithms for
this problem.
We observe that Gonzalez's greedy algorithm can be efficiently parallelized in
several MapReduce rounds; in practice, we find that two rounds are sufficient,
leading to a $4$-approximation.
We contrast this with an existing parallel algorithm for $k$-center that runs
in a constant number of rounds, and offers a $10$-approximation.
In depth runtime analysis reveals that this scheme is often slow, and that its
sampling procedure only runs if~$k$ is sufficiently small, relative to the
input size.
To trade off runtime for approximation guarantee, we parameterize this
sampling algorithm, and find in our experiments that the algorithm is not only
faster, but sometimes more effective.
Yet the parallel version of Gonzalez is about~$100$ times faster than both its
sequential version and the parallel sampling algorithm, barely compromising
solution quality.
\end{abstract}

\medskip
\paragraph*{Keywords}
\noindent
Clustering, $k$-center, approximation algorithms, parallel algorithms.

\section{Introduction}

Clustering is a fundamental task in interpreting data sets in contexts such as social networking, event recognition and bioinformatics.
For many applications, the data sets can be prohibitively large, and there may be
insufficient RAM to perform the necessary calculations efficiently,
even when seeking approximate solutions. 
There are parallel-computing schemes such as MapReduce~\cite{dean2008mapreduce}
that offer the ability to overcome the memory obstacle.

The $k$-center problem is a famous clustering problem, is NP-hard, and has
well known (sequential) polynomial-time algorithms that offer essentially the best
approximation possible.
We describe a multi-round parallel algorithm for $k$-center, analyze in detail several
parallel algorithms and compare them with
one of these sequential methods.
Inspired by theoretical guarantees and evaluation, we run comprehensive
experiments, including trading off approximation for running time.

\subsection{Clustering algorithms}

Generally, clustering problems involve optimizing some function that indicates how well the clusters
portray underlying structures in the data.
In a \emph{metric} clustering problem, the
weights, representing the similarity between objects,
observe the triangle inequality.
The best-known example is of course the Euclidean metric.
In the context of clustering, points in a metric space can be modelled as vertices in a complete graph.
Each vertex stands for a data point, and each edge is weighted to
indicate the distance (or dissimilarity) between the two adjacent points.
The $k$-center problem is one of the fundamental NP-hard clustering problems on a metric input.

\begin{definition}[$k$-\textsc{Center}]
Find a set of at most~$k$ centers ---
here we assume they are a subset of the vertices ---
such that the maximum distance from a vertex to its assigned center
is minimized.
The key task is to choose the optimum set of $k$~centers, as each of the
remaining vertices would be assigned to its nearest center.
For a set of points~$V$, solution set~$S$ containing at most~$k$ vertices,
and a distance function~$d$, the objective of this problem can be considered as minimizing the objective $\min_{v\in V}\max_{s\in S}d(v, s)$.
\end{definition}

This objective has many applications, from vehicle routing to document clustering, in which it relates to concepts such as the furthest traveling time, or the least ``similar'' document. 
It can alternatively be considered to be  minimizing the (maximum)
covering \emph{radius} of the clusters.
Related classic NP-hard clustering questions include
$k$-median, $k$-means, and facility location problems.

 Via a reduction from the \textsc{Dominating set} problem, 
 Hsu and Nemhauser proved that, for all~$\eps > 0$, it is NP-hard to guarantee approximations within a factor $2-\eps$ of optimum for $k$-center~\cite{hsu1979easy}. 
Also exploiting the connection with the dominating set problem, Hochbaum and Shmoys gave a $2$-approximation algorithm for the $k$-center problem~\cite{hochbaum1985best}.
Gonzalez introduced an greedy $2$-approximation algorithm for the $k$-center problem~\cite{gonzalez1985clustering}.
Each of these $k$-center algorithms is inherently sequential,
none admitting a simple parallel implementation. 

\subsection{Parallel algorithms}

While approximation algorithms provide guaranteed performance with 
polynomial-time complexity, often data sets are large enough that running
these algorithms efficiently requires prohibitively large amounts of RAM. 
For such instances, we can instead design algorithms that split the data across multiple machines, and process each part in parallel before aggregating the results. 
One important paradigm for parallel computing is
MapReduce~\cite{dean2008mapreduce}. 
There are several fast approximation algorithms for
famous clustering problems, such as $k$-center and $k$-median, in
MapReduce~\cite{blelloch2010parallel,chierichetti2010max,ene2011fast}.
Karloff et al.~\cite{karloff2010model} introduced a theoretical
model of computation for the MapReduce paradigm that is often applied
to the analysis of MapReduce algorithms~\cite{bahmani2012scalable,lattanzi2011filtering}.
They offered a comprehensive method for theoretically structuring algorithms for MapReduce,
and defined a family of classes for MapReduce algorithms.

A MapReduce algorithm consists of a series of interleaving rounds of sequential \emph{mappers} and parallel \emph{reducers}.
A map round assigns each data point independently to some reducer(s);
the reducers run in parallel, each performing some procedure
on the subset of points it has been assigned.
A program in MapReduce may consist of several iterations
of mappers and reducers, each involving potentially different map and reduce functions.

Both the $k$-means~\cite{bahmani2012scalable} and $k$-median~\cite{ene2011fast} problems have been adapted to the MapReduce framework.

\subsection{Our contribution}
We provide a very careful and detailed examination both of the best-known
MapReduce approximation algorithm for $k$-center~\cite{ene2011fast}, based on
sampling,
and a parallel implementation of Gonzalez's algorithm (that typically gives a
$4$-approximation).
The $2$-round special case of the latter approach was recently considered by Malkomes et
al.~\cite{malkomes2015fast}, although their analysis and experiments differ
considerably from ours.
We describe in depth the performance and computational requirements of these
approaches, and detail how this procedure can be adapted to allow for cases where RAM is insufficient even for the $2$-round parallel solution.
Based in part on a careful calculation of its running time,
we generalize the sampling MapReduce scheme of Ene et al.~\cite{ene2011fast},
to trade off approximation guarantee for speed.

Our experiments show that the parallelized Gonzalez approach is often~100 times faster than the alternatives, while being almost as effective.
These are the first experimental results for the $k$-center algorithm of Ene et al.~\cite{ene2011fast}.
Our results conform with the findings of Malkomes et
al.~\cite{malkomes2015fast}, regarding the performance of their greedy approach.

\section{Related work}

\subsection{Approximations}
The $k$-center problem was first adapted to the MapReduce scheme
by Ene et al.~\cite{ene2011fast}.
Their algorithm selects a uniformly random sample of the points,~$S$, and
adds points to~$S$ until most vertices of the graph are
within a bounded distance of the sample.
Finally, it adds remaining unrepresented vertices to~$S$. 
A sequential $k$-center algorithm, with approximation factor~$\alpha'$,
is then run on~$S$.
With high probability,
the~$k$ resulting centers constitute
a $5\alpha'$-approximation for the $k$-center instance.
When implemented using one of the $2$-approximation algorithms described
above, with high probability, this results in a $10$-approximation overall. 
Ene et al.~\cite{ene2011fast} apply a similar scheme to the $k$-median
problem, with an $11\alpha^*$-approximation (where~$\alpha^*$ is a the factor
of the standard approximation for $k$-median).

Recently, there has been increased interest in adapting $k$-center
to Map\-Reduce. 
Ceccarello et al.~\cite{DBLP:CeccarelloPPU14} gave
a MapReduce diameter-approximation procedure with low parallel depth.
From this, they derive a
$k$-center solution for graphs with unit-weight edges: 
for $k\in \Omega(\log^2 n)$,
with high probability, this is a $O(\log^3 n)$-approximation. 
Im and Moseley~\cite{im2015} have described a
randomized $3$-round $2$-approximation algorithm that requires
prior knowledge of the value of the optimal solution.
Although they have announced that this leads
to a $4$-round $2$-approximation without the requirement,
the details have yet to be outlined.
Very recently, Malkomes et al.~\cite{malkomes2015fast} gave a $2$-round approach similar to ours.

\subsection{Experiments}
Ene et al.~\cite{ene2011fast} reported that their $k$-center MapReduce scheme
performs poorly due to the sensitivity of $k$-center to sampling.
Unfortunately, there are no results nor implementation details to confirm this.
In combination with another simpler algorithm, we investigate the empirical performance their $k$-center scheme in greater detail.

Conversely, their $k$-median implementation performs significantly better than the worst-case guarantee.
Solutions are comparable to sequential algorithms with much better bounds. 
Ene et al.'s results were based on the $3$-approximation algorithm
from Arya et al.~\cite{arya2004local}.
There have been recent advances in $k$-median approximation algorithms, by Byrka et al.~\cite{DBLP:ByrkaPRST15}, as well as by Li and Svensson~\cite{li2013approximating}, and including these might improve the approximation bound.

\section{Parallel \emph{k}-center}

We describe and analyze an approximation algorithm for the $k$-center problem that, for most practical cases, achieves a $4$-approximation in only two MapReduce rounds.
The intuition is that a sequential $k$-center algorithm finds in the first
round a sample from  each of the reducers
such that the distance to all of the unsampled points is bounded.
Running a standard factor-$2$ algorithm on the sample
reveals a factor-$4$ solution to the whole instance.
Additional rounds can be performed in cases where even the sample is too large
for a single machine:
this would usually occur for very large values of~$k$. 
Experiments show that this approach is often as good as that of the baseline
sequential algorithm.
The $2$-round case of our algorithm is similar to the approach of Malkomes et
al.~\cite{malkomes2015fast}.
Along with generalizing to larger instances, we analyze the run time of these
algorithms in more detail,
and provide an alternative, shorter proof of the two-round factor-four
approximation.

\subsection{Description}

In this paper, the standard~$k$-center approach is the factor-$2$-approximation
of Gonzalez~\cite{gonzalez1985clustering}, which we refer to as \gon.
This algorithm chooses an arbitrary vertex from the graph, and marks it as a center.
At each following step, the vertex farthest from the existing centers is marked as a new center,
until~$k$ centers have been chosen.
As the edge weights comprise a metric, the triangle inequality ensures that the resulting set of centers comprises a $2$-factor approximation.

\paragraph*{Parallelized version}
Given a point set~$V$ and a metric~$d$, with~$\OPT$ representing the optimal
covering radius,
Algorithm~\ref{alg:mr} obtains a set of centers~$\{c_i\}$
for which all points
in~$V_i$ -- where $\{V_i\}$ partitions~$V$ -- are within radius $2\cdot \OPT$ from~$c_i$. 
Running \gon on $C= \cup_i c_i$ obtains~$k$ centers whose
covering radius for~$C$ is~$2 \cdot \OPT$.
Assume that we have~$m$ machines each with capacity~$c$.
If $n/m \leq c$ and $k\cdot m\leq c$ then,
due to the triangle inequality, this results in a $4$-approximation MapReduce algorithm for $k$-center.
If the sample is too large to fit onto the final machine, further iterations of the first round can be run on the sample until there are few enough points.
Each additional round increases the approximation ratio by~$2$.

We dub this multi-round scheme
for $k$-center \ours, for ``MapReduce Gonzalez'', as shown in 
Algorithm~\ref{alg:mr}.

\begin{algorithm}
\begin{algorithmic}[1]
\small
\STATE $S \gets V$
\WHILE{ $|S| > c$}
\STATE The mapper arbitrarily partitions~$V$ into sets $V_1, \ldots, V_m$ such that $\cup_i V_i = V$ and $|V_i| \leq \lceil n/m \rceil $, and each set $V_i$ is sent to a reducer~$\rho_i$. 
\STATE In parallel, each reducer~$\rho_i$ runs \gon on~$V_i$, and returns a set~$C_i$ containing the $k$~centers found.
\STATE $S \gets \cup_i C_i$
\ENDWHILE
\STATE The mapper sends all points in $S$ to a single reducer.
\STATE This reducer runs \gon on $S$, and returns the set of centers $C^G$.
\RETURN $C^G$.
\end{algorithmic}
\caption{$\texttt{\ours}(V,k,m)$}
\label{alg:mr}
\end{algorithm}

\subsection{Approximation}

Algorithm \ours clearly
runs in polynomial time; to prove the four approximation of the $2$-round
case, we prove the following intermediate result.
For an arbitrary subset~$S$ of the vertex set~$V$,
let~$S^G$ denote the set of points in the solution 
obtained by running \gon on~$S$, and let $\mathit{SOL}_S$ denote the value,
the covering radius, of this solution.

\begin{lemma}\label{thm:approx}
For each $S \subseteq V$, $\mathit{SOL}_S\leq 2\cdot \OPT$.
\end{lemma}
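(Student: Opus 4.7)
The plan is to adapt the standard Gonzalez pigeonhole argument, but with a crucial twist: the pigeonhole step uses the optimal clustering of the whole set~$V$, not just of~$S$. Let me set up notation. Denote by $c_1, c_2, \ldots, c_k$ the centers that \gon picks from~$S$, in the order chosen, and let $p \in S$ be the point realizing the final covering radius, i.e., $\mathit{SOL}_S = \min_{j \in [k]} d(p, c_j)$.

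First, I would establish that the $k+1$ points $c_1, \ldots, c_k, p$ are pairwise at distance at least~$\mathit{SOL}_S$. This is immediate from the greedy rule of \gon: each $c_{i+1}$ is the point of~$S$ farthest from $\{c_1, \ldots, c_i\}$, so the sequence of residual covering radii is monotonically non-increasing, bottoming out at $\mathit{SOL}_S$. Consequently every pair among $c_1, \ldots, c_k, p$ is separated by at least $\mathit{SOL}_S$.

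Next, I would appeal to pigeonhole on the optimal clustering of~$V$. By definition of $\OPT$, there is a set of at most~$k$ centers in~$V$ that covers all of~$V$ within radius $\OPT$, and hence induces a partition of~$V$ into at most~$k$ groups of diameter at most $2\cdot \OPT$ (by the triangle inequality). Since $c_1, \ldots, c_k, p$ all lie in~$V$, two of these $k+1$ points fall in the same group, so their distance is at most $2\cdot \OPT$. Combining with the lower bound from the previous paragraph gives $\mathit{SOL}_S \leq 2 \cdot \OPT$.

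I do not foresee any real technical obstacle; the single conceptually nontrivial point is to notice that the pigeonhole argument only needs the optimal clustering of~$V$ to exist, and never requires the optimal centers to lie in~$S$. A naive application of Gonzalez's $2$-approximation to~$S$ as a standalone instance would yield only $\mathit{SOL}_S \le 2 \cdot \OPT_S$, where $\OPT_S$ (with centers restricted to~$S$) could be much larger than $\OPT$. Pushing the argument through against the global optimum is what makes the final bound clean and leads to the overall factor-$4$ guarantee for \ours.
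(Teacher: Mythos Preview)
Your proof is correct and follows essentially the same approach as the paper: both pigeonhole the Gonzalez centers into the optimal clusters of~$V$ (not~$S$) and use the triangle inequality to bound pairwise distances within a cluster by $2\cdot\OPT$. The only cosmetic difference is that the paper does a case split on whether each optimal cluster contains exactly one Gonzalez center, whereas you add the witness point~$p$ to get $k+1$ points and force a collision directly---your version is the cleaner of the two, but it is the same argument.
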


\begin{proof}
Let~$V^*$ be an optimal set of centers.
The vertex set~$V$ can be partitioned into~$k$ sets $\{V^*_i\}_{i=1}^m$
such that  all points in set~$V^*_j$ are within $\mathit{OPT}$ of some
center $j\in V^*$.

First, assume every set $S^G\cap V^*_j$ has exactly one point.
This point,~$s_j$,
can serve as the center
for every point in~$V^*_j$.
Then every point~$x$ in $V^*_j$, and hence in $V_j^* \cap S$, is within
$2\cdot\mathit{OPT}$ of~$s_j$,
as both~$x$ and~$s_j$ are within $\mathit{OPT}$ of~$j \in V^*$.

However, if there is some partition~$V^*_j$
with $|S^G\cap V^*_j| > 1$, then points in the same partition are within $2\cdot \mathit{OPT}$ from each
other. \gon adds a new center to~$S^G$
only when it is the farthest from the points previously added to~$S^G$.
The presence of two centers within $2 \cdot \mathit{OPT}$ implies that all
points in~$S$ are within $2\cdot \mathit{OPT}$ of~$S^G$
(if there were some point farther, it would be in~$S^G$ instead).

Therefore, for every subset~$S$ of~$V$, the value of the $k$-center solution
returned by \gon on~$S$ is at most twice the optimal solution for~$V$.
\end{proof}

With sufficient space for~$C$,
the consequence of Lemma~\ref{thm:approx} is  a  factor-four approximation.
\begin{lemma}
\label{thm:mr}
If $n/ m \leq c$ and $k\cdot m\leq c$, then the $k$-center algorithm can be implemented in MapReduce in two rounds with a $4$-approximation guarantee. 
\end{lemma}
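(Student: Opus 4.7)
The plan is to leverage Lemma~\ref{thm:approx} twice -- once per round -- and then combine the two bounds with the triangle inequality. The memory hypotheses $n/m \leq c$ and $k\cdot m \leq c$ are what make the two-round implementation feasible: the first bound says each reducer in round one receives at most~$c$ points (since $|V_i| \leq \lceil n/m \rceil$), and the second bound says the aggregate sample $S = \bigcup_i C_i$, of size at most $k\cdot m$, fits onto the single reducer in round two. These are exactly what is needed to terminate the \texttt{while} loop of Algorithm~\ref{alg:mr} after one iteration.

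For the approximation guarantee, I would argue as follows. Each $V_i \subseteq V$, so by Lemma~\ref{thm:approx} applied to $S := V_i$, the output $C_i$ of \gon on $V_i$ has covering radius at most $2\cdot \OPT$ with respect to $V_i$; hence every point $v \in V_i$ lies within distance $2\cdot \OPT$ of some $c_i(v) \in C_i \subseteq S$. In round two, $S$ is again a subset of $V$, so another application of Lemma~\ref{thm:approx} (this time with the subset being $S$ itself) yields that \gon on $S$ returns a set $C^G$ whose covering radius with respect to $S$ is at most $2\cdot \OPT$. In particular, $c_i(v)$ is within $2\cdot \OPT$ of some point $c^*(v) \in C^G$.

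Concatenating the two bounds via the triangle inequality gives $d(v, c^*(v)) \leq d(v, c_i(v)) + d(c_i(v), c^*(v)) \leq 2\cdot \OPT + 2\cdot \OPT = 4\cdot \OPT$, for every $v \in V$. Since $|C^G| \leq k$, this is a valid $k$-center solution with cost at most $4\cdot \OPT$, proving the claimed factor-four approximation in two rounds.

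There is essentially no hard step here: all the approximation content is already packaged in Lemma~\ref{thm:approx}, and the MapReduce structure is dictated directly by Algorithm~\ref{alg:mr}. The only subtlety worth stating explicitly is that Lemma~\ref{thm:approx} compares the covering radius of \gon's output on a subset against the \emph{global} optimum $\OPT$ for $V$, not against the optimum restricted to that subset; this is precisely what allows both applications of the lemma to be measured against the same yardstick and then chained by the triangle inequality.
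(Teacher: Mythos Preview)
Your proof is correct and follows essentially the same route as the paper: apply Lemma~\ref{thm:approx} to each $V_i$ in round one, apply it again to the aggregate $S=\bigcup_i C_i$ in round two, and chain the two $2\cdot\OPT$ bounds via the triangle inequality. If anything, your version is slightly more explicit---the paper invokes Lemma~\ref{thm:approx} by name only for the second round, whereas you (correctly) note that the first-round bound also rests on the same lemma, and your closing remark about comparing against the global $\OPT$ is exactly the point that makes the chaining work.
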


\begin{proof}
Let~$V_i$ refer to the points mapped to reducer~$\rho_i$.
Since we run \gon on~$V_i$, every point in~$V_i$ is within
$2\cdot\OPT$ of a center in~$C_i$ and hence in~$C$.

According to Lemma~\ref{thm:approx},
running \gon on~$C$ arrives at a set of centers~$C^G$
that is a $2\cdot\OPT$ solution on~$C$.
By the triangle inequality, it then follows that every vertex in the graph is
within $$2\cdot \OPT + 2\cdot \OPT = 4\cdot \OPT$$ of the $k$~centers~$C^*$.
\end{proof}

We now describe the properties of the setup and input for when \ours
can run effectively in two rounds.
The capacity required is $O(\max(n/m, k\cdot m))$,
based on which of the two rounds receives the most points. 
We assume that $n>k$, otherwise the solution to $k$-center is trivial.
We further assume that $n/ m > k$:
if this is not the case, then we can reduce
the number of machines.
For small~$k$, we only require that there is sufficient space across the machines to store the data set: 
that is, $m\cdot c \geq n$. 
We could also exploit external memory, for example by running multiple instances of our MapReduce algorithm and using a $k$-center
algorithm on the disjoint union of the solutions;
such cases are beyond the scope of this paper.

\subsection{Multi-round analysis}

If $k\cdot m> c$, we
lack the required memory to store the sample on a single machine, and therefore run further iterations of the while loop.  
In such instances, \ours uses more MapReduce rounds, loosening the approximation guarantee.

\begin{lemma}
\label{thm:mr_rounds}
If $n/ m \leq c$ and $k\cdot m\geq c$, then the $k$-center problem can be implemented in $i$ rounds with a $2(i+1)$-approximation, where~$i$ is chosen such that inequality~\ref{eq:inequality} is satisfied.
\end{lemma}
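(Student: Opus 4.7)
The plan is to extend the argument of Lemma~\ref{thm:mr} by induction on the number of iterations of the while loop. For $j \geq 0$, let $S_j$ denote the sample held at the end of the $j$-th iteration, with $S_0 = V$. The invariant to establish is that every vertex $v \in V$ lies within $2j \cdot \OPT$ of some point of $S_j$.

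The base case $j = 0$ is trivial. For the inductive step, each reducer in iteration $j+1$ receives some subset of $S_j$ and runs \gon on it; by Lemma~\ref{thm:approx}, the $k$ centers it returns form a $2 \cdot \OPT$-cover of that subset, so every point of $S_j$ is within $2 \cdot \OPT$ of $S_{j+1}$. Chaining this with the inductive hypothesis via the triangle inequality yields the invariant for $j + 1$.

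Suppose the loop terminates after $i$ iterations, so that $|S_i| \leq c$. A single final reducer then runs \gon on $S_i$ to produce $C^G$. A further application of Lemma~\ref{thm:approx}, combined with one last use of the triangle inequality, gives that every vertex of $V$ lies within $2i \cdot \OPT + 2 \cdot \OPT = 2(i+1) \cdot \OPT$ of $C^G$. This is the claimed approximation factor, and it correctly specializes to the two-round case of Lemma~\ref{thm:mr} when $i = 1$.

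The main obstacle — and the reason this lemma deserves a separate statement — is pinning down the inequality~\ref{eq:inequality} that specifies when $i$ iterations suffice. This demands a careful accounting of how $|S_j|$ shrinks from round to round: the first iteration partitions $n$ points among $m$ reducers and produces $|S_1| = km$ centers; each subsequent iteration needs at least $\lceil |S_{j-1}|/c \rceil$ reducers (since each must receive at most $c$ points), so $|S_j|$ decays as a geometric sequence with ratio roughly $k/c$. I would tabulate the first few values, conjecture the closed form $|S_j| \approx k^{j} m / c^{j-1}$, verify it inductively, and then solve for the smallest $i$ satisfying $|S_i| \leq c$; this is precisely the condition that inequality~\ref{eq:inequality} should encode.
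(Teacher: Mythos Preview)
Your proposal is correct and follows essentially the same route as the paper: an inductive chaining of Lemma~\ref{thm:approx} via the triangle inequality for the $2(i+1)$ factor, followed by a geometric recurrence for the shrinking sample to pin down inequality~\eqref{eq:inequality}. The only refinement the paper adds is to track the number of machines $m^{(i)}$ rather than $|S_j|$ and to carry the ceiling correction explicitly, yielding the extra $\frac{1-(k/c)^i}{1-k/c}$ term in the closed form---your approximate $|S_j|\approx k^j m/c^{j-1}$ captures only the leading $m(k/c)^i$ part, so when you carry out the tabulation be sure to propagate the $+1$ from each $\lceil\cdot\rceil$.
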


During each round, the number of centers is decreased, ending when they fit on a single machine:
each additional round adds two to the approximation factor.
As $k\cdot m> c$ and $m\geq 1$, it follows that $k>c$.
Even relaxing the requirement that $k\cdot m\leq c$, it is still necessary
that $k\leq c$.
Without this condition, selecting~$k$ centers
from a single machine seems to require incorporating external memory in some manner.

Assuming that $n/m\leq c$, after the first round we have $k\cdot m$ centers, so we send them to $m' = \lceil(k\cdot m)/c\rceil \leq  (k\cdot m)/c + 1$ machines.
After the second round, we have $k\cdot m'$ centers, which we can send to $m'' \leq  \lceil(k\cdot m')/c\rceil \leq m\cdot k^2/c^2 + k/c +1$ machines.  
In general,
the number of machines required after~$i$ rounds observes the bound
\begin{equation}\label{eq:inequality}
m^{(i)} \leq m\cdot (k/c)^i + \frac{1-(k/c)^i}{1-k/c}\,,
\end{equation}
and we can run the final round when $m^{(i)}<2$.
As~$i$ increases, the second term in the inequality approaches ${1}/[1-(k/c)]$, which
itself will be less than~$2$ only if $2k<c$.
Intuitively, during each round we select $k$ centers from each of the machines, so if~$k$ is close to~$c$ then the reduction in the number of centers in each round will be small.

\section{Revisiting the sampling approach}

In this section, we introduce a generalization of
Ene et al.'s~\cite{ene2011fast} iterative-sampling procedure.
As we show below, their algorithm is slower, but is more effective on the whole, than the sequential and parallel versions of Gonzalez's algorithm.
So that we can trade off runtime with approximation ratio,
we add a new parameter to the iterative-sampling approach, and call this
generalization \ene.
Before this, we make some alterations to the scheme to prevent eccentric behaviors that sometimes occur.

\subsection{Termination}
The core of Ene et al.'s scheme is shown as Algorithm~\ref{alg:EIM-MRS}.
Our implementation this algorithm adjusts the removal of points from~$R$
to ensure that the size of the set decreases in every iteration.
For our implementation of line~\ref{step:keepx},
we remove vertices whose distance from~$S$ is \emph{equal} to that
from~$v$ to~$S$. 
In the original presentation such a vertex would remain in~$R$,
which might lead to iterations in which
no vertices are removed from~$R$, and the procedure looping indefinitely. 

{
\begin{algorithm}
\begin{algorithmic}[1]
\small
\STATE $S\gets \emptyset, R\gets V$
\WHILE{$|R|>(4/\eps)kn^{\eps}\log n$}
		\STATE The mappers partition~$R$, and uniquely map each set $R^i$ to a reducer~$i$.
		\STATE Reducer~$i$ independently adds each point in~$R^i$ to set~$S^i$  with probability $9kn^{\eps}(\log n)/|R|$, and to set~$H^i$  with probability $4n^{\eps}(\log n)/|R|$.
		\STATE Let $H:=\bigcup_{1\leq i\leq\lceil n^{\eps}\rceil}H^i$ and $S:=S\cup(\bigcup_{1\leq i\leq\lceil n^{\eps}\rceil}S^i)$. The mappers
assign~$H$ and~$S$ to one machine, along with all edge distances between~$H$ and~$S$.
		\STATE The reducer sets $v\gets \texttt{Select}(H,S)$.
		\STATE The mappers arbitrarily partition~$R$, with~$R^i$ denoting  these sets. Let~$v$, $R^i$, $S$, and the distances between~$R^i$ and~$S$ be assigned to reducer~$i$. 
		\STATE For $x\in R^i$, remove $x$ from $R^i$ if $d(x,S)\leq d(v,S)$.\label{step:keepx}	 	
		\STATE Let $R:=\bigcup_{i}R^i$.
\ENDWHILE
\STATE Output $C:=S\cup	R$.
\end{algorithmic}
\caption{$\texttt{EIM-MapReduce-Sample}(V,E,k,\eps)$}
\label{alg:EIM-MRS}
\end{algorithm}
}

With relatively small graphs, there is
a non-trivial probability that the point~$v\in H$ will also be in~$S$.
In such cases, the vertex~$v$ will be at equal distance to~$S$ as the
points prior to it in the ordering given in line~\ref{step:whileR}
of $\texttt{Select}()$ (Algorithm~\ref{alg:select}).
This would mean that even points added to the sample might not be removed from~$R$.  
This increases the relative size of $R\, \cap\, S$, also increasing
the probability of no vertices being removed from~$R$ in subsequent rounds,
as~$H$ is sampled from~$R$.
If all points in~$R$ are eventually added to~$S$ then the algorithm cannot
terminate.
Therefore we assume that sampled points should \emph{always} be removed
from~$R$, and as such have adapted the algorithm to reflect this.

\begin{algorithm}[H]
\begin{algorithmic}[1]
\small
\STATE For each point $x\in H$, find $d(x,S)$
\STATE \label{step:whileR} Order the points in~$H$ according to their distance to~$S$ from farthest to smallest.
\STATE Let~$v$ be the point in position $\phi(\log n)^{th}$  in the ordering.
\RETURN $v$
\end{algorithmic}
\caption{$\texttt{Select}(H,S)$, with our parameter~$\phi$.}
\label{alg:select}
\end{algorithm}

\subsection{Trade-off}
Ene et al.~\cite{ene2011fast} prove that with high probability their MapReduce
procedure runs in $O(1/\eps)$ rounds.  
To decrease the number of rounds, we introduce parameter~$\phi$ to
$\texttt{Select()}$, which
trades off approximation for running time.
The original algorithm effectively fixed~$\phi$ to be~$8$.

In the original \ene scheme,
the expected number of points in~$R$ that are farther from~$S$ than~$v$ is
$8\log n\cdot|R|/|H|=|R| \cdot 2/n^{\eps}$.
By choosing a lower threshold for point~$v$, we decrease the number of points that remain in~$R$.
Since the sampling algorithm terminates when~$|R|$ falls below a the threshold defined by~$v$,
potentially this decreases the number of iterations.
We introduce a variable~$\phi$, and choose~$v$ such that it is
the $\phi(\log n)^{\text{th}}$ farthest point in~$h$ from~$S$.

To obtain a feasible $k$-center solution from the sample given by 
\texttt{EIM-MapReduce-Sample()},
a sequential $k$-center procedure is run on the resulting sample
in an additional MapReduce round. 
Note that in line~3,~$R$ is partitioned into $\lceil|R|/n^{\eps}\rceil$ sets of size
at most $\lceil n^{\eps}\rceil$, and in line~7 the mappers partition  $R$ into $\lceil n^{1-\eps}\rceil$ sets of size at most  $\lceil|R|/n^{1-\eps}\rceil$.
In Section~\ref{sec:sampling}, we prove that -- with weaker bounds, and with
appropriate values of~$\phi$ -- the probabilistic
$10$-approximation still holds.

\section{Runtime analysis}
\label{sec:runtime}

We now analyze in detail these parallel algorithms for $k$-center.
Ene et al.~\cite{ene2011fast} proved that their sampling procedure required $O(1/\eps)$ rounds with high probability, while \ours can run in two rounds given sufficient resources.
We consider also the computations required in each of the rounds to determine the expected overall runtime.

\subsection{\ours}
Assuming that $n/ m \leq c$ and $k\cdot m\leq c$, \ours will run in two consecutive MapReduce iterations.
The first iteration involves running~$m$ concurrent $k$-center algorithms,
each on $n/m$ vertices. 

The runtime of \gon on~$N$ points is $O(k\cdot N)$: each time a new center is selected, we need to find the distance of that center to all of the other vertices.
So the runtime for the first round of \ours is $O(k\cdot n/m)$, with a low
constant in the~$O(\cdot)$ expression.
In its second round, \ours runs \gon on the $k\cdot m$
centers obtained from the first round; this gives a runtime of $O(k^2\cdot m)$. 
Therefore the total runtime of \ours is $O(k\cdot n/m + k^2\cdot m)$, and 
for larger data sets, we would expect the dominant term to be $kn/m$.

\subsection{\ene}
The sampling algorithm, \ene, has, with high probability,~$T\in \Theta(1/\eps)$
iterations -- each comprising
three MapReduce rounds --
followed by a final clean-up round at the end that solves a single $k$-center
instance. 
Let~$R_{\ell}$  and~$S_{\ell}$ denote the state of sets~$R$ and~$S$,
respectively, in iteration~$\ell$ of the main loop of the algorithm.
Counting from the first iteration, $|R_0|=n$ and, with high probability, $|R_\ell| = O(n/n^{\ell\eps})$.
In each iteration, points in~$R$ are added to~$H$ with probability
$4n^{\eps}(\log n)/|R|$, so $|H|$ is expected to be~$O(n^\eps \log n)$.
And in line~5, $|S_\ell|$ becomes $|S_{\ell-1}| + O(kn^\eps\log n)$, so that,
starting with $|S_0|=0$, we expect 
$|S_\ell| = O((\ell+1) kn^\eps\log n)$.
We now analyze each MapReduce round.

\vspace{2mm}
\paragraph*{Round 1}
(Lines $3$ \& $4$). 
This round involves $O(|R_{\ell}|/m)$ operations during iteration~$\ell$,
so the total number of operations is
\vspace{-2mm}
$$
\sum_{\ell< T} \frac{|R_\ell|}{m} \in \frac 1 m O\left(\sum_{\ell< T} \frac n {n^{\ell\eps}}\right)
\in O\left(\frac 1 m \cdot \frac n {1-n^{-\eps}}\right)\,.
$$
\vspace{-6mm}

\paragraph*{Round 2}
(Lines $5$ \& $6$). 
This uses $O(|H|\cdot|S_\ell|)$ distance calculations per iteration,
which is $$O((\ell+1) k(n^{\eps}\log n )^2)\,,$$
for each of
the~$T \in O(1/\eps)$ (w.h.p.) iterations.
The total runtime is $O(k(n^{\eps}\log n)^2\sum_{\ell < T}\ell)$, which is in
$$O((k/\eps^2)(n^{\eps}\log n)^2)\,.$$
\vspace{-10mm}

\paragraph*{Round 3}
(Lines $7$ \& $8$). 
The third round requires $O(|R_{\ell}|\cdot|S_{\ell}|/m)$ distance calculations in each
iteration, so summing over all iterations this is
\vspace{-2mm}
$$
  O \left( \frac 1 m k n^{1+\eps}\log n\sum_{\ell < T} \frac{\ell+1}{n^{\ell\eps}}\right)
 \subseteq O\left(\frac{k}{m} \cdot \frac{n^{1+\eps}\log n}{(1-{n^{-\eps}})^2} \right)\,.
$$\vspace{-6mm}

\paragraph*{Final round}
 This sends~$|S_T|$ points to a single machine,
on which, say,  \gon is run.
With high probability,
this takes time $$O(({k}/{\eps})n^{\eps}\log n \cdot k)=
O(({k^2}/{\eps})n^{\eps}\log n)\,.$$

In practice, we find the dominant procedure for \ene is Round~3, in which points are
removed from~$R$.
This makes sense, as in most cases $k\cdot n^{1+\eps}$ is much larger than
$k^2 n^{\eps}$; the converse inequality implies that $k > n$.
Furthermore, \ours also has $O(k\cdot n/ m)$ complexity for cases where $k\cdot m<n/m$.

Experiments confirmed that
the dominant round for each algorithm was linear in~$k$, rather
than quadratic.
Comparing the dominant round of \ene to \ours,
we expect  \ene to be slower by a factor of
$n^{\eps}(1-{n^{-\eps}})^{-2} \log n$.

\section{Approximation ratio of \ene}
\label{sec:sampling}

In this section, we prove that the approximation bound for
\ene still holds under our changes to the sampling algorithm. 
This proof is based on the original analysis,
with the necessary details altered for correctness and relevance.
The main impact of the parameter~$\phi$ is to vary the number of points we consider to be represented by the existing sample. 
To analyze this,
we first require a formal description of what it means for points to be well represented (or \emph{satisfied}) by a sample.

Let~$Y$ denote some subset of the vertex set~$V$ (that we hope will
`represent'~$V$).
Each vertex $x\in V$ is assigned to its closest point in~$Y$,
breaking ties arbitrarily but consistently.
For each $x\in V$, let $A(x,Y)$ denote the point in~$Y$ to which~$x$ is
assigned; if~$x$ is in~$Y$ we assign~$x$ to itself. 
For $y\in Y$, let $B(y,Y)$ denote the set of all points assigned to~$y$
when~$Y$ is the `assignment subset'. 

For a sample~$S$,
Ene et al. say that a point~$x$ is \emph{satisfied} by~$S$ with respect
to~$Y$ if $$\min_{y\in S}d(y,A(x,Y))\leq d(x,A(x,Y))\,.$$
In words, there is a point in the sample that is closer to its $Y$-assigned
point, than~$x$ is to its $Y$-assigned point, $a(x,Y)$.

Let~$SOL_E$ denote the set of points returned by $\texttt{EIM{-}MapReduce{-}Sample}$. 
The set~$SOL_E$ might not include every center in~$Y$, but a point~$x$ can still be satisfied if $a(x,Y)\notin SOL_E$
by including a point in~$SOL_E$ that is closer to~$x$ than $A(x,Y)$. 
If $\texttt{EIM{-}MapReduce{-}Sample}$ returns a set 
that satisfies every point in~$V$,
then the sample is very representative of the initial data set,
and the clustering algorithms based on it should perform well.
However, there is no guarantee that the sample satisfies all points;
instead we can only be guarantee that the number of unsatisfied points
is small, and their contribution to the performance of the clustering algorithms is negligible compared to that of the satisfied points.

The sets at the core of $\texttt{EIM{-}MapReduce{-}Sample}$
change with every iteration. 
Denote the state of  sets~$R$,~$S$ and~$H$ at the beginning 
of iteration~$\ell$ by~$R_\ell$, $S_\ell$,
and~$H_\ell$ respectively,
where~$R_0=V$ and~$S_0=\varnothing$.
The set of points that are removed from~$R$ during iteration~$\ell$ is
denoted by~$D_\ell$, so $R_{\ell+1}=R_\ell\setminus D_\ell$. 
Let $U_\ell$ denote the set of points in $R_\ell$ that are not satisfied by $S_{\ell+1}$ with respect to $Y$. 
Let $U$ denote the set of all points that are not satisfied by $SOL_E$ (the sample returned by the algorithm) with respect to $Y$. 
If a point $x$ is satisfied by $S_\ell$ with respect to $Y$, then it is also satisfied by $SOL_E$ with respect to $Y$, and therefore $U\subseteq \bigcup_{\ell\geq 1}U_\ell$.

From the analysis of Ene et al.~\cite{ene2011fast} we have the following lemma.
\begin{lemma}
\label{unsatisfied}
Let~$Y$ be an arbitrary set with no more than~$k$ points. 
In iteration~$\ell$ of \sloppy\texttt{EIM{-}MapReduce}\texttt{{-}Sample}, where $\ell\geq0$,
$$\mathbf{P}\hspace{-1mm}\left[|U_{\ell}|\geq |R_{\ell}|/3n^{\eps}\right]\leq
n^{-2}\,.$$
\end{lemma}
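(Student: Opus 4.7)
The plan is to fix an arbitrary $y \in Y$, bound the number of points in $B(y,Y) \cap R_\ell$ that fail to be satisfied by the new samples drawn in iteration $\ell$, and then take a union bound over the (at most $k$) elements of $Y$.

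Write $B_y := B(y,Y) \cap R_\ell$ and order its elements $x_1, x_2, \ldots, x_{|B_y|}$ by nondecreasing distance to $y$. The key structural observation is that if some $x_i \in B_y$ is placed in $S_{\ell+1}$ during the new sampling step, then $d(x_i, y) \leq d(x_j, y)$ for every $j \geq i$, so $x_i$ witnesses the satisfaction of every such $x_j$ with respect to $Y$. Consequently, the unsatisfied points of $B_y$ form an initial prefix $x_1, \ldots, x_{j_y - 1}$, where $j_y$ is the smallest index with $x_{j_y}$ sampled (and $j_y := |B_y| + 1$ if no element of $B_y$ is sampled).

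The per-point inclusion probability into $S$ is $p := 9 k n^{\eps} (\log n)/|R_\ell|$, and I will pick the threshold $t := |R_\ell|/(3 k n^{\eps})$, calibrated so that $p t = 3 \log n$. For any $y$ with $|B_y| \geq t$, the probability that none of $x_1, \ldots, x_t$ is sampled is at most $(1-p)^t \leq e^{-p t} = n^{-3}$. Since $|Y| \leq k \leq n$, a union bound shows that, with probability at least $1 - k n^{-3} \geq 1 - n^{-2}$, every $y \in Y$ has $j_y - 1 < t$ (when $|B_y| \geq t$) or trivially $j_y - 1 \leq |B_y| < t$ (when $|B_y| < t$). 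On this good event, the contribution of each $y$ to $|U_\ell|$ is at most $\min(j_y - 1, |B_y|) < t$, so
\[
|U_\ell| \;<\; k \cdot t \;=\; \frac{|R_\ell|}{3 n^{\eps}},
\]
which is the claimed bound.

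The main obstacle is bookkeeping rather than a deep probabilistic step: I must use that satisfaction by $S_{\ell+1}$ is monotone in the sample set (so it suffices to consider only the new samples and ignore $S_\ell$), and I must treat the small-$B_y$ case ($|B_y| < t$) uniformly with the large-$B_y$ case so that both feed into the same $\min(j_y - 1, |B_y|) < t$ bound. The numerical constants ($9$ in $p$ and $3$ in $t$) are chosen precisely so that $pt = 3\log n$, which in turn forces $(1-p)^t \leq n^{-3}$ and leaves room for the union bound over $k \leq n$ elements of $Y$.
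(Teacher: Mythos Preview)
Your argument is correct. The paper itself does not supply a proof of this lemma; it simply attributes the result to Ene et al.\ and states it. Your proof is the standard one for this type of sampling lemma: partition $R_\ell$ according to the assignment map $A(\cdot,Y)$, order each block $B_y$ by distance to $y$, observe that sampling any $x_i$ into $S_{\ell+1}$ satisfies the entire tail $x_i,x_{i+1},\ldots$, and then use independence to bound the probability that the first $t=|R_\ell|/(3kn^\eps)$ elements of a block all evade the sample. The constants are tuned exactly as you note, so that $pt=3\log n$ yields a per-block failure probability of $n^{-3}$ and the union bound over $|Y|\le k\le n$ lands at $n^{-2}$.

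Two minor remarks. First, your phrase ``the unsatisfied points of $B_y$ form an initial prefix'' is a slight overstatement: points in that prefix may still be satisfied by elements of $S_\ell$ or by samples drawn from outside $B_y$. What you actually use, and what is true, is the one-sided inclusion $U_\ell\cap B_y\subseteq\{x_1,\ldots,x_{j_y-1}\}$, which is all the bound requires. Second, you should note (or implicitly use) that the while-loop guard $|R_\ell|>(4/\eps)kn^\eps\log n$ ensures $p<1$, so the Bernoulli sampling is well defined; this also guarantees $t\ge 1$ so the prefix argument is nonvacuous.
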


We now show that our adaptation of the sampling algorithm by Ene et al. retains the same probabilistic $10$-approximation guarantee.

\begin{lemma}
\label{chernoff}
Let~$Y$ be a set of no more than~$k$ points.
In iteration~$\ell$ of the while loop in
\sloppy\texttt{EIM{-}MapReduce}\texttt{{-}Sample},
let~$v_\ell$ denote the threshold in the current iteration:
the  point in~$H_\ell$ that is the~$\phi\log(n)^{th}$ most distant
from~$S_{\ell+1}$.
Then there exist values~$a$ and~$b$ such that, for some~$\gamma > 0$,
$$\mathbf{P}\left[\frac{a|R_\ell|}{n^{\eps}}\leq |R_{\ell+1}|
\leq\frac{b|R_\ell|}{n^{\eps}}\right]\geq 1-\frac{2}{n^{1+\gamma}}\,.$$
\end{lemma}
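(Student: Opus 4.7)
\emph{Approach.} The plan is to condition on $S_{\ell+1}$ (the union of the $S^i$'s drawn in line~4) and analyse $H_\ell$ as an independent Bernoulli sample from $R_\ell$ with per-point probability $p := 4 n^{\eps}(\log n)/|R_\ell|$. Since in line~4 of Algorithm~\ref{alg:EIM-MRS} the sets $S^i$ and $H^i$ are built by independent coin flips, $H_\ell$ is independent of $S_{\ell+1}$ given $R_\ell$. Conditioned on $S_{\ell+1}$, enumerate $R_\ell$ as $r_1, r_2, \dots, r_N$ in order of decreasing distance to $S_{\ell+1}$, where $N := |R_\ell|$, so that $H_\ell$ is a Bernoulli-$p$ sample of this ordered list.

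\emph{Translating to an order statistic.} If $v_\ell = r_\rho$, then, under the tie-breaking convention adopted in Algorithms~\ref{alg:EIM-MRS} and~\ref{alg:select}, $|R_{\ell+1}|$ equals $\rho - 1$, the number of points of $R_\ell$ strictly farther from $S_{\ell+1}$ than $v_\ell$. Set $t_a := \lceil a N / n^{\eps}\rceil$ and $t_b := \lfloor b N / n^{\eps}\rfloor$. Because $v_\ell$ is the $\phi \log n$-th farthest element of $H_\ell$, the event $|R_{\ell+1}| < aN/n^{\eps}$ forces at least $\phi \log n$ of $r_1, \dots, r_{t_a}$ to lie in $H_\ell$, while $|R_{\ell+1}| > bN/n^{\eps}$ forces fewer than $\phi \log n$ of $r_1, \dots, r_{t_b}$ to lie in $H_\ell$.

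\emph{Applying Chernoff.} Let $X_a := |H_\ell \cap \{r_1, \dots, r_{t_a}\}|$ and $X_b := |H_\ell \cap \{r_1, \dots, r_{t_b}\}|$. Each is a sum of independent $\text{Bernoulli}(p)$ variables with means $\mathbf{E}[X_a] \approx 4a \log n$ and $\mathbf{E}[X_b] \approx 4b \log n$. Choose $a$ and $b$ to bracket $\phi/4$, with $a < \phi/4 < b$, so that $\phi \log n$ lies strictly above $\mathbf{E}[X_a]$ and strictly below $\mathbf{E}[X_b]$. A two-sided multiplicative Chernoff bound then gives $\mathbf{P}[X_a \geq \phi \log n] \leq n^{-c_a}$ and $\mathbf{P}[X_b \leq \phi \log n] \leq n^{-c_b}$ for constants $c_a, c_b > 0$ depending on the gaps $\phi/4 - a$ and $b - \phi/4$. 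Taking these gaps large enough pushes both exponents past $1 + \gamma$ for some $\gamma > 0$; a union bound over the two bad events and the tower property (removing the conditioning on $S_{\ell+1}$) then yield the stated inequality.

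\emph{Main obstacle.} The subtlest step is the conditioning argument: one must verify that, given $S_{\ell+1}$, the set $H_\ell$ is still a Bernoulli-$p$ sample of $R_\ell$ and that the ordering used to define $t_a, t_b$ is measurable. Both follow from the independence of the $S^i$ and $H^i$ coin flips, but care is required to respect the order of operations within an iteration. A secondary technical nuisance is the treatment of distance ties and the rounding in $t_a, t_b$; these perturb the constants $a, b, \gamma$ but not the form of the bound.
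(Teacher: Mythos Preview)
Your proposal is correct and follows essentially the same route as the paper: order $R_\ell$ by distance to $S_{\ell+1}$, count how many of the top $t$ points fall into the Bernoulli sample $H_\ell$, and apply a two-sided Chernoff bound to trap the rank of the $\phi\log n$-th order statistic. Your $X_a$ and $X_b$ are exactly the paper's quantities $L(a\cdot r,H_\ell)$ and $L(b\cdot r,H_\ell)$.

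The one substantive difference is scope rather than method. The paper does not stop at existence: it carries the free parameters through the Chernoff inequalities, obtains explicit quadratic constraints on $acd$ and $bcd$ in terms of $\phi$ and $x=1+\gamma$, and then solves for the regime $a=1$, $b\le 5$ (and the threshold $\phi>5.15$) that the downstream lemmas require. Your sketch establishes the lemma as stated, but if you want the later $10$-approximation argument to go through you will need to redo the calculation with those specific targets, not just ``gaps large enough''. Your explicit conditioning on $S_{\ell+1}$ and the independence of the $S$- and $H$-coins is cleaner than the paper's treatment, which implicitly fixes $|H_\ell|$ to its mean.
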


\begin{proof}
\label{chernoff-proof}
Recall that we selected a pivot point~$v$, and discarded the points that are well represented by the current sample, compared to~$v$. 
Note that $R_{\ell+1}$ is the set of points in $R_\ell$ such that the distance to the sample $S_{\ell+1}$ is greater than the distance between the pivot point and the sample.

Ene et al. introduce these handy definitions.
For a vertex~$t$, we refer to the number of points in~$R$ further
from~$S_{\ell+1}$ than the point~$t$ as the~$\mathit{rank}_R$
of~$t$. 
For some value~$i$, and a set $Y\subseteq R$,
define $L(i,Y)=\big\vert{\{x\in Y : \mathit{rank}_R(x) \leq i\}}\big\vert$
as the number of points in the set~$Y$ that have rank smaller than~$i$.

\noindent
Let $r={d|R_\ell|}/{n^{\eps}}$, and let $|H_\ell| = c\cdot n^{\eps}\log n$.  
By design,
$$\mathbf{E}[ L(a\cdot r, H_\ell)] = {a\cdot
r\cdot|H_\ell|}/{|R_\ell|} = acd \log n$$
and $$\mathbf{E}[ L(b\cdot r, H_\ell)] = {b\cdot
r\cdot|H_\ell|}/{|R_\ell|} =  bcd \log n\,.$$
If $a\cdot r\leq |R_{\ell+1}| \leq b\cdot r$, then with high probability, the pivot
(chosen to be the  $\phi\log n^{\text{th}}$ point in $H_{\ell}$) will be in the range $[L(a\cdot r,H_{\ell}), L(b\cdot r,H_{\ell})]$.

\noindent
By the Chernoff inequality, 
\begin{align*}
 \mathbf{P}&[L(a\cdot r, H_\ell)\geq \phi\cdot\log n] \\
&= \mathbf{P}[L(a\cdot r, H_\ell)\geq(1+\delta)E[L(a\cdot r, H_\ell)]] \\
&= \mathbf{P}[L(a\cdot r, H_\ell)\geq(1+\delta)\cdot a c d\log n]  \\
&\leq \exp\left[\frac{-\delta^2\cdot a c d\log n}{2+\delta}\right] \\
&= n^{-{\delta^2\cdot acd}/(2+\delta)}.
\end{align*}
\vspace{-2mm}
 
\noindent
Choosing~$\delta$ so that
$(1+\delta)E[L(a\cdot r, H_\ell)] \leq
\phi\cdot\log n$
gives $\delta \leq -1+\phi/(a c d)$. 
Since the Chernoff bound requires that $\delta>0$,
we insist that $\phi>acd$.

The lemma statement requires $\mathbf{P}[L(a\cdot r, H_\ell)\geq \phi\cdot\log n ]\leq n^{-(1+\gamma)}$, which we can achieve by finding values
of $a$, $c$, $d$, and  $\phi$ that for some $\gamma > 0$ satisfy
$$\frac{({\phi}/(a c d)-1)^2\cdot acd}{(2+({\phi}/(a c d)-1))}\geq
(1+\gamma)\,.$$
Letting $x = 1+\gamma$, this is equivalent to $(acd)^2-(2\phi+x)acd+(\phi^2-x\phi)\geq 0$, which has real roots at $acd = \phi+x/2 \pm\sqrt{2x\phi+x^2/4}$.
Similarly, 
\begin{align*}
\mathbf{P}&[L(b\cdot r, H_\ell)\leq \phi\cdot\log n]\\[2pt]
  &= \mathbf{P}[L(b\cdot r, H_\ell)\leq(1-\delta)E[L(b\cdot r, H_\ell)]] \\[2pt]
    &= \mathbf{P}[L(b\cdot r, H_\ell)\leq(1-\delta)\cdot bcd\log n ]\\[2pt]
 & \leq \exp\left[\frac{-\delta^2\cdot bcd\log n }{2}\right] \\[2pt]
   & = n^{-{\delta^2 bcd}/{2}}  \leq n^{-x}.
\end{align*}
\vspace{-2mm}

Choosing~$\delta$ so that $(1-\delta)E[L(b\cdot r, H_\ell)] \leq
\phi\cdot\log n$ gives $\delta \leq 1-\phi/(b c d)$.
 Since the Chernoff bound requires that $\delta>0$, this gives the
constraint $\phi<bcd$.

For the last inequality to hold,
we need to find values of~$b$,~$c$,~$d$, and~$\phi$ such that  
$$(bcd)^2-(2\phi+2x)bcd+\phi^2\geq 0\,.$$
This has real roots at $bcd = \phi+x\pm\sqrt{2x\phi+x^2}$.

This gives feasible solutions for $acd \leq \phi+x/2 -\sqrt{2x\phi+x^2/4}$ and $bcd \geq \phi+x+\sqrt{2x\phi+x^2}$.
For later results we require that $a=1$ and $b\leq 5$. 
So for there to exist feasible values of~$c$ and~$d$, we have the following constraint,
\begin{align}\label{eqn:bound}
\frac{\phi+x+\sqrt{2x\phi+x^2}}{b} \leq    \phi+\frac x 2 -\sqrt{2x\phi+\frac{x^2}{4}},
\end{align}
where $b\leq 5$ and $x =  1+\gamma$.
When this bound holds, we can find values of each of the parameters such that the probability of $|r_{\ell+1}|$ being outside of the defined bounds is less than $1/n^x$, and 
therefore 
$$\mathbf{P}\left[\frac{|R_\ell|}{n^{\eps}}\leq
|R_{\ell+1}|\leq\frac{5|R_\ell|}{n^{\eps}}\right]\geq 1-1/n^{x}-1/n^{x}\,,
$$
which is $1-{2}n^{-(1+\gamma)}$.
With this probability,
the number of points in~$R_\ell$ that are further from~$S$ than~$v_\ell$
(and hence the size of the set~$R_{\ell+1}$)
is in the range $[{|R_\ell|}/{n^{\eps}},{5|R_\ell|}/{n^{\eps}}]$.
\end{proof}

Ene et al.~\cite{ene2011fast} prove that with probability $1-O(1/n)$, it is possible to map each unsatisfied point to a satisfied point such that no two unsatisfied points are mapped to the same satisfied point. 
Such a mapping allows them to bound the cost of the unsatisfied points with regards to the cost of the optimal solution. 
Their proof relies on the choice of $b=4$, and the bound from Lemma~\ref{chernoff} giving a probability greater than $1-2n^{-2}$. 
However, we use $b\leq 5$, and only assure a probability of $1-2n^{-(1+\gamma)}$. 
Therefore, we prove that the required mapping exists with probability $1-O(n^{-(1+\gamma)})$; by setting $\gamma = (\log\log n)/\log n$ this gives a probability of  $1-O(1/\log n)$, which is sufficient for large values of $n$. 
The choice of~$b$ arises from the requirement that $b/n^{\eps}<2$:
for $\eps =0.1$ and $n\geq 10,\!000$, this holds for $b\leq 5$.

In the original analysis, Ene et al. proved that their results hold \emph{with high probability}, which they define as having probability $\geq 1-O(1/n^2)$.
We instead bound our confidence in these results with probability $1-O(1/n^{1+\gamma})$ for a variable $\gamma$, which we will refer to as \emph{with sufficient probability}, or \emph{w.s.p.}.

The following results follow from the above analysis and that given by Ene et al~\cite{ene2011fast}. 

\begin{lemma}
For the sample $S$ returned by $\texttt{EIM{-}MapReduce{-}Sample}$,
\emph{w.s.p.} we have $\OPT(V,S) \leq 5\cdot \OPT$.
\end{lemma}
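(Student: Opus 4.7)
The plan is to follow the strategy of Ene et al., transplanting their argument to the modified sampling scheme and its weaker probabilistic bounds from Lemma~\ref{chernoff}. I would fix $Y = V^*$, an optimal $k$-center solution with cost $\OPT$, so that every $x \in V$ satisfies $d(x, A(x, V^*)) \leq \OPT$; the goal becomes exhibiting $k$ centers in $S$ that cover $V$ within $5\OPT$, with the natural strategy being to choose one representative per optimal cluster $B(y, V^*)$.

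Next, I would split $V$ into points satisfied by $S$ with respect to $Y$ and points that are not. If $x$ is satisfied, then by definition some $s \in S$ has $d(s, A(x,Y)) \leq d(x, A(x,Y)) \leq \OPT$, and the triangle inequality yields $d(x, s) \leq 2\OPT$. Any optimal cluster containing at least one satisfied point therefore admits a representative in $S$ within $\OPT$ of its center, covering the whole cluster within $2\OPT$.

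For the unsatisfied points, the plan is to combine Lemma~\ref{unsatisfied} (which bounds $|U_\ell|$ by $|R_\ell|/(3n^{\eps})$ with probability $1 - n^{-2}$) with Lemma~\ref{chernoff} under $b \leq 5$ (which forces geometric decay of $|R_\ell|$ with probability $1 - 2n^{-(1+\gamma)}$). Union-bounding over the $O(1/\eps)$ iterations and setting $\gamma = (\log\log n)/\log n$ as prescribed, the total unsatisfied count is w.s.p.\ small enough that a Hall-type matching argument, mirroring Ene et al., produces an injective map from unsatisfied points to satisfied points inside the same optimal cluster. Chaining triangle inequalities along this map and through the relevant optimal centers then delivers the $5\OPT$ bound; optimal clusters whose every point is unsatisfied are absorbed into the sample when $R$ is appended to $S$ in the final round, so they cover themselves within $2\OPT$ via a point of $R \subseteq S$.

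The main obstacle is the probability rescaling. Ene et al.\ enjoy per-iteration failure probability $2n^{-2}$ and obtain a clean high-probability guarantee; with our parameters the failure probability degrades to $2n^{-(1+\gamma)}$, which after the union bound yields only the ``sufficient probability'' $1 - O(1/\log n)$. This weaker guarantee must be threaded carefully through the matching and counting steps, and one must check that the Hall-style inequalities remain slack enough once the constants from $b \leq 5$ (in place of the original $b = 4$) are substituted. The combinatorial core of the original proof, however, is insensitive to the rescaling and should carry over essentially verbatim once the revised bounds are in place.
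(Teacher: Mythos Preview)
Your proposal is correct and follows essentially the same approach as the paper: the paper does not give an independent proof of this lemma but states that it ``follows from the above analysis and that given by Ene et al.'', i.e., one reruns the Ene--Im--Moseley satisfied/unsatisfied decomposition and injective matching argument with the revised per-iteration bounds from Lemmas~\ref{unsatisfied} and~\ref{chernoff}, accepting the weaker ``w.s.p.''\ guarantee after the union bound. Your sketch fills in exactly that plan, including the handling of the rescaled constants ($b\le 5$) and the $\gamma=(\log\log n)/\log n$ choice already flagged in the paper.
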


\begin{lemma}
The procedure resulting from running an $\alpha$-approximation algorithm on the sample returned by  $\texttt{EIM{-}MapReduce{-}Sample}$ achieves a $4\alpha+2$-approximation for the $k$-centre problem   \emph{w.s.p.}
\end{lemma}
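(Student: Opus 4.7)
The plan is to combine the preceding lemma ($\OPT(V,S) \leq 5\cdot \OPT$ \emph{w.s.p.}) with a bound on the intrinsic $k$-center cost of the sample $S$, and then apply the approximation guarantee of the $\alpha$-approximation algorithm run on $S$. Let $C\subseteq S$ denote the set of centers that this $\alpha$-approximation returns.

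First, I would reuse the argument from the proof of Lemma~\ref{thm:approx} to show that $\OPT(S) \leq 2\cdot \OPT$: for each cluster $V^*_j$ of an optimal solution for $V$, pick any representative in $S\cap V^*_j$ whenever this intersection is nonempty. This yields at most $k$ candidate centers lying in $S$, and every $s\in S$ is within $2\cdot \OPT$ of one of them via a single triangle-inequality hop through the corresponding optimal center $v^*_j$. Hence running the $\alpha$-approximation on $S$ produces $C$ with $\max_{s\in S}d(s,C)\leq \alpha\cdot\OPT(S)\leq 2\alpha\cdot\OPT$.

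Next, for each $v\in V$, I would bound $d(v,C)$ by composing triangle-inequality hops through a carefully chosen sample point. For points that are \emph{satisfied} by the retained sample in the sense of the preceding discussion, the definition guarantees a witness sample point within $2\cdot\OPT$, and the remaining hop into $C$ costs at most $2\alpha\cdot\OPT$, contributing at most $(2\alpha+2)\cdot\OPT$. For points that were removed from $R$ via the pivot rule in \texttt{EIM-MapReduce-Sample} but remain unsatisfied in the technical sense, I would invoke the charging argument underlying Lemma~\ref{unsatisfied}, mapping each such point to a unique satisfied point in the same optimal cluster at cost $\leq 2\cdot\OPT$, and then using the satisfied point's sample witness and the internal $\alpha$-approximation bound on $S$ to reach $C$.

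The main obstacle will be this final accounting step: a naive single-hop triangle inequality through the $\OPT(V,S)\leq 5\cdot\OPT$ bound yields only a $(2\alpha+5)\cdot\OPT$ guarantee. To reach $(4\alpha+2)\cdot\OPT$, one must route unsatisfied points through satisfied intermediaries so that each $\alpha$-factor is multiplied against the tighter $\OPT(S)\leq 2\cdot\OPT$ quantity rather than the looser $\OPT(V,S)\leq 5\cdot\OPT$ bound, while the satisfaction and charging hops contribute additively. The \emph{w.s.p.}\ qualifier then follows by a union bound over the events of Lemmas~\ref{unsatisfied} and~\ref{chernoff} together with the preceding lemma on $\OPT(V,S)$, whose failure probabilities are each $O(1/\log n)$ after the choice $\gamma=(\log\log n)/\log n$ made earlier.
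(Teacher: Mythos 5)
Your skeleton (bound the intrinsic cost of the sample by $2\cdot\OPT$ via the Lemma~\ref{thm:approx} argument, so that $d(s,C)\leq 2\alpha\cdot\OPT$ for all sample points, then add hops for the remaining points) is reasonable, and for calibration note that the paper gives no explicit proof of this lemma at all: it states that it follows from the preceding lemmas combined with Ene et al.'s analysis, with the modified constants ($b\leq 5$, $\phi$) and the weaker ``w.s.p.''\ probability. The genuine gap in your reconstruction is the unsatisfied-point step. Lemma~\ref{unsatisfied} is purely a cardinality bound ($|U_\ell|<|R_\ell|/3n^{\eps}$ with the stated probability); there is no ``charging argument underlying'' it, and in particular it does not provide, for every unsatisfied point, a satisfied point \emph{in the same optimal cluster}: the bound is global over $R_\ell$, so nothing rules out an entire small optimal cluster being unsatisfied, in which case the $2\cdot\OPT$ hop to a same-cluster satisfied partner simply does not exist. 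The injective mapping you are implicitly invoking is a separate result of Ene et al.\ (whose survival under $b\leq 5$ and the $1-O(n^{-(1+\gamma)})$ probability is exactly what the paper has to re-argue), it is the device for the \emph{aggregate}, $k$-median-style accounting, and injectivity buys you nothing for the max objective of $k$-center, where every single removed point must be individually covered.

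The mechanism that actually yields the per-point guarantee --- and the reason Lemma~\ref{chernoff} carries the lower bound $a=1$, which your sketch never uses --- is the pivot: w.s.p.\ the set $R_{\ell+1}$ of points farther from $S_{\ell+1}$ than $v_\ell$ has size at least $|R_\ell|/n^{\eps}$, which exceeds the bound $|R_\ell|/3n^{\eps}$ on the number of unsatisfied points, so some point beyond the pivot is satisfied and hence within $2\cdot\OPT$ of $S_{\ell+1}$; therefore $d(v_\ell,S_{\ell+1})\leq 2\cdot\OPT$, and every point removed in iteration $\ell$ is within $2\cdot\OPT$ of the sample at the time of its removal. Union-bounding this over the $O(1/\eps)$ iterations is what feeds the preceding lemma ($\OPT(V,S)\leq 5\cdot\OPT$) and, together with Ene et al.'s theorem, constitutes the paper's (implicit) proof. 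Finally, the obstacle you flag is partly moot: the ``naive'' $(2\alpha+5)$ bound already implies $4\alpha+2$ whenever $\alpha\geq 3/2$, in particular for the $\alpha=2$ instantiation that gives the advertised $10$-approximation; the missing ingredient in your write-up is not the routing of the $\alpha$ factor but the justification that removed-yet-unsatisfied points are close to the sample.
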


When running a $2$-approximation algorithm on the sample, this result gives a $10$-approximation bound on the resulting procedure. 
To achieve a success probability higher than $1-O(1/n)$, we need $x\geq 1$: by
the bound in Inequality~\eqref{eqn:bound}, this implies that $\phi>5.15$.

\section{Experiments}

In this section we compare three algorithm (families), both in terms of speed
and effectiveness, and contrast the theoretical properties of these methods
(as shown in Table~\ref{table_compare}
with their empirical performance.

\begin{table}[!t]
\renewcommand{\arraystretch}{1.4}
\caption{Theoretical comparison of algorithms: Approximation factor
represented by~$\alpha$, run times are asymptotic, $O(\cdots)$.}
\label{table_compare}
\centering
\begin{tabular}{l|r|r|c}
\hline
Algorithm    & 
$\alpha$
& Rounds & Runtime \\
\hline
GON~\cite{gonzalez1985clustering} 
	& $2$  
		& n/a    
			& $k\cdot n$  \\
\ours 
	& $4$    
 		& 2 
			& $kn/m + k^2m$ \\[1mm]
\ene~\cite{ene2011fast}
	& $10$
		& $O(1 / \eps )$ 
			& \pbox{20cm}{\small $\displaystyle \frac{kn^{1+\eps}\log
				n}{m(1-{n^{-\eps}})^2}$} \\
\end{tabular}
\end{table}

\subsection{Setup}
We run experiments on three algorithm families, each of which we implement in the~C language.
First is the typically $2$-round algorithm, \ours;
second is (our Section-\ref{sec:sampling} generalization of) the sampling algorithm
of Ene et al.~\cite{ene2011fast}, \ene;
third is the (standard) sequential algorithm, \gon.
The latter, with its factor-$2$ approximation guarantee serves as an
effectiveness baseline.

For the sake of consistency with previous literature,
our method of implementing these algorithms mimics that of
Ene et al.~\cite{ene2011fast} in several ways.
In particular, we adopt a MapReduce approach, but do not record the cost of moving data between machines.
(As \ours involves fewer rounds, the expected cost of moving data between
machines would be less than for \ene.)
We simulate the parallel machines sequentially on a single machine,
taking the longest processing time of the simulated machines
as the processing time for that MapReduce round. 
For all parallel implementations, \gon is the subprocedure for selecting the final centers. 

The experimental system is a `commodity' machine,
with 8GB of main memory and an Intel\textregistered \\
Core\texttrademark i7-2600 CPU @ 3.40GHz.

\subsection{Experimental design}
Ene et al.~\cite{ene2011fast} generated synthetic data,
designed to have a fixed number of similarly sized clusters.
Moreover, they  tested their algorithm for values of~$k$ equal to the number of clusters.
We evaluate the algorithms over a range of values of~$k$ and vary the numbers of inherent clusters.
In practice, the number of clusters may not be known in advance, and the number of clusters required can be independent of the structure of the data.
To better determine how well these algorithms are likely to perform in practice,
we extend these experiments to test on  graphs with different underlying structures.

In all of the experiments, the distance is Euclidean, computed as required
from the locations of the points.
The $k$-center algorithm assumes a complete graph as input, and a matrix
representation of a graph, with all distances stored explicitly,
might result in a significant proportion of the data sent between machines being unnecessary.
The number of machines,~$m$, is fixed to~$50$, while~$n$ and~$k$ vary.
Our preliminary experimentation with the \ene algorithm, over a range of values of~$\eps$,
confirms that Ene et al.'s choice of~$\eps=0.1$ was good.

In Section~\ref{sec:sampling}, we introduced a parameter~$\phi$ to the \ene sampling
approach.
In our experiments, we test the effect of lowering~$\phi$ from its
``original'' value of~$8$, both in terms of runtime and effectiveness
Corresponding with our theoretical results in Section~\ref{sec:sampling} we
choose $\phi=6$; and to determine the robustness of the algorithm, we test
with $\phi=4$ and $\phi=1$, which are below the bound of $\phi=5.15$ that was given in Section~\ref{sec:sampling}.

\subsection{Data sets}

We test against a combination of real and synthetic data sets,
primarily in two and three dimensions,
but with several real data sets of larger dimension.
The data sets have a range of sizes, from $10,\!000$ through to
$1,\!000,\!000$ points, with varying degrees of inherent clustering.
Our synthetic  data sets have three different formats, viz.

\begin{description}
\setlength\itemsep{0.5em}
\item[\unif] The~$n$ points are uniformly distributed in a two-dimensional square.
\item[\gauss] 
The~$k'$ cluster centers -- where~$k'$ might not equal~$k$ -- are uniformly randomly generated in a unit cube.
The~$n$ points are distributed into these clusters uniformly at random, resulting in clusters of roughly similar size.
This helps determine the accuracy with which the procedures can
identify different clusters.
Distance from points to the cluster center follows a Gaussian distribution with
$\sigma=1/10$.
These data sets mimic those used in the experiments of Ene et al~\cite{ene2011fast}.
\item[\unbalanced]
An unbalanced arrangement, similar to~\gauss.
The distribution of points to inherent clusters is biased
such that around half of the points are in a single
(inherent) cluster; the distribution between the remaining clusters remains uniform.
\end{description}

We generate three graphs of each size and type, and run the algorithms twice over each data set, taking the average.
This gives a total of six results for each type of data set, over three different graphs.

We take real data sets from the UCI Machine Learning Repository~\cite{Lichman:2013}, over a wide range of sizes, applications and dimensions. 
We run four tests over each of the real data sets, and take the average result.
We include results for the $25,010$-point training set for the \poker data
set, and the~$10\%$ sample from the $4,000,000$-point \kdd 1999 data set.

\begin{figure}[!t]
\center
{\includegraphics[width=3.1in]{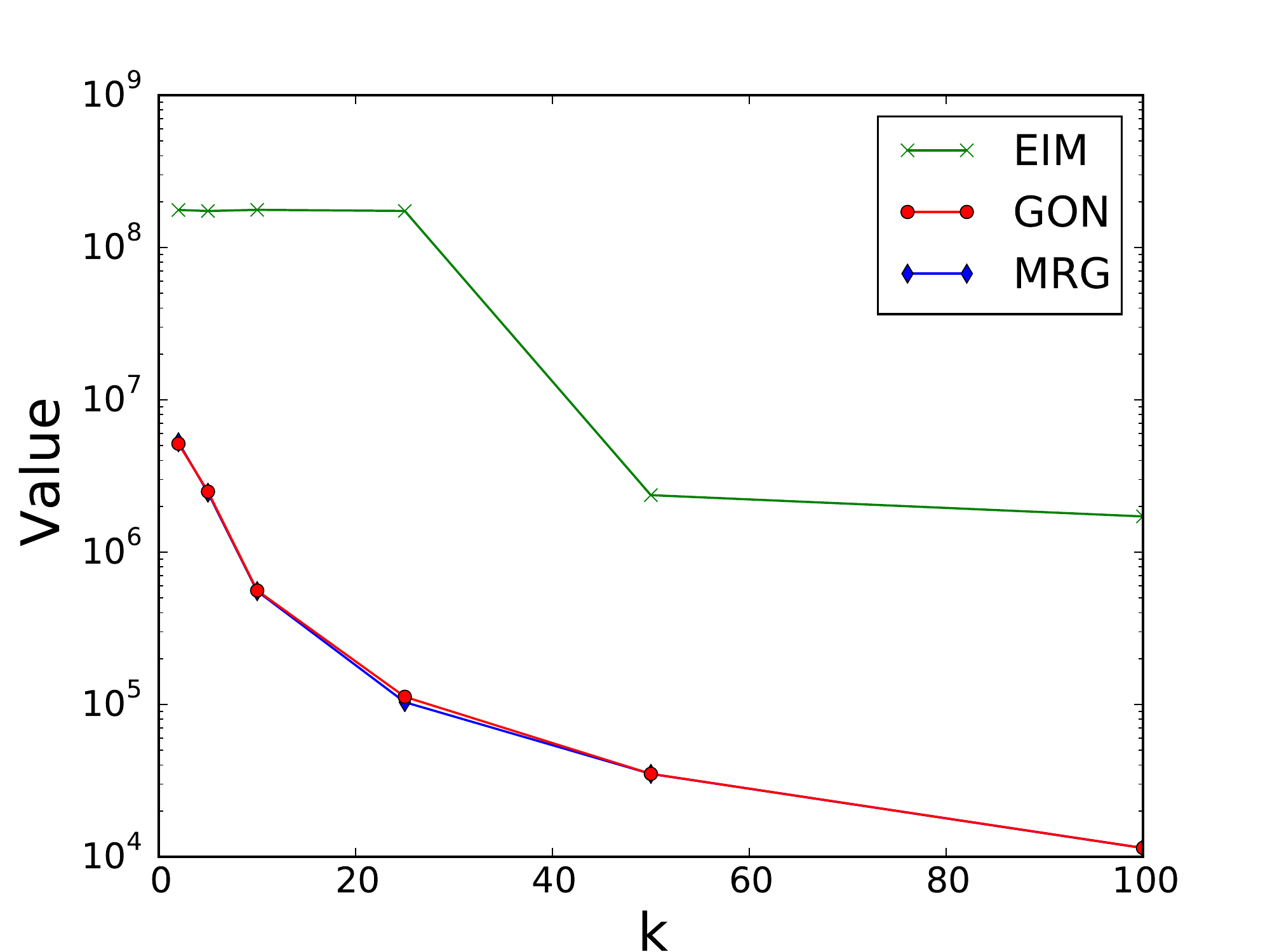}}
\caption{\small Solution values over~$k$
on \kdd 1999.}\vspace{-3mm}
\label{fig_real}
\end{figure}

\vspace{-2mm}
\section{Results}
\vspace{-2mm}
Overall \ours is faster than the alternative procedures - often by orders of magnitude,
with \ene running slower than the sequential algorithm despite being parallelized, conforming with the analysis in Section~\ref{sec:runtime}. 
\vspace{-1mm}

\begin{figure*}[!t]
\vspace{-4mm}
\centerline 
\hfil
{\subfloat[\gauss($n=1,\!000,\!000$, $k'=25$).]{\includegraphics[width=3.1in]{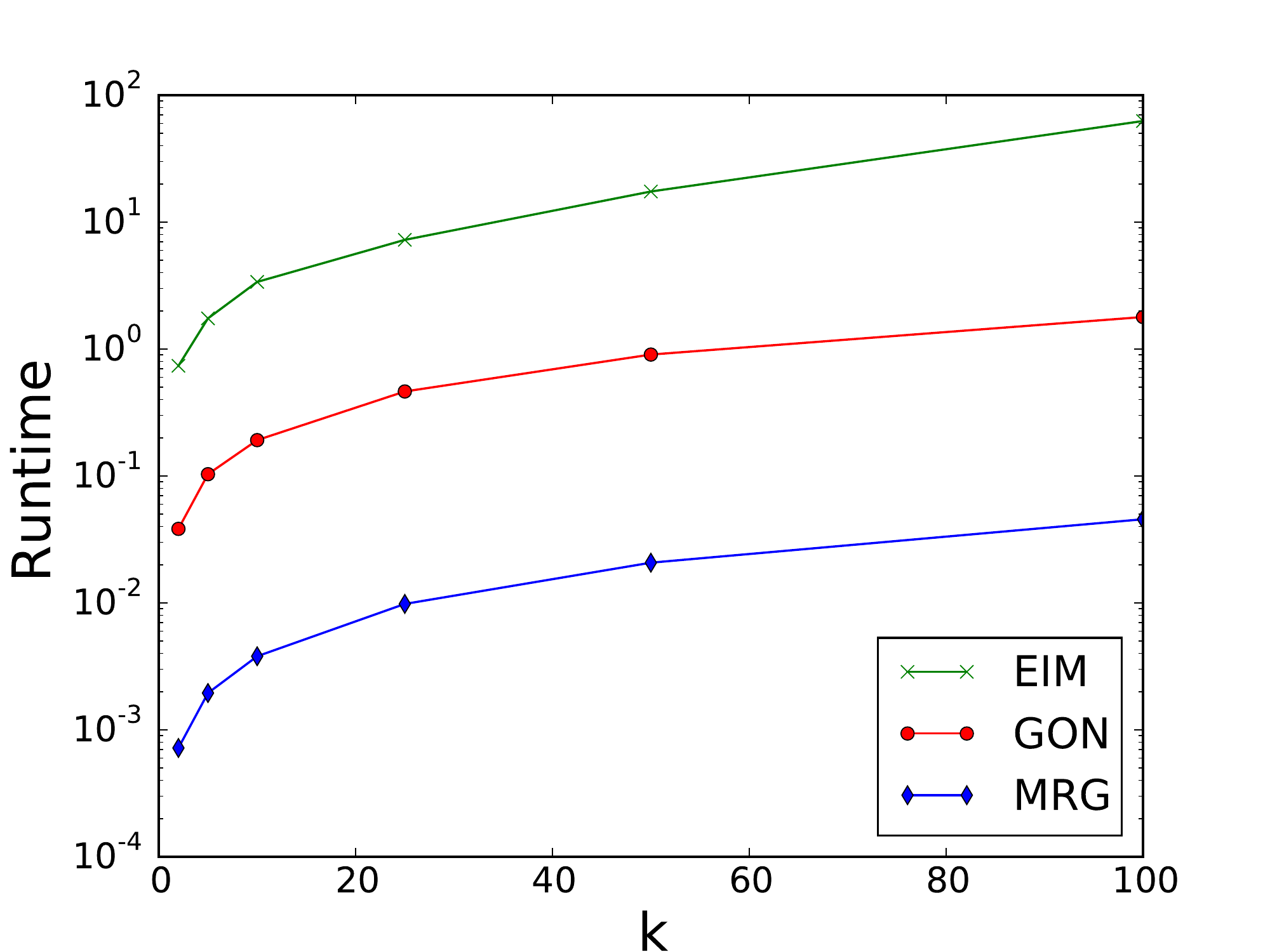}
\label{fig_time_first}}}
\hfil
\subfloat[\unif($n=100,\!000$).]{\includegraphics[width=3.1in]{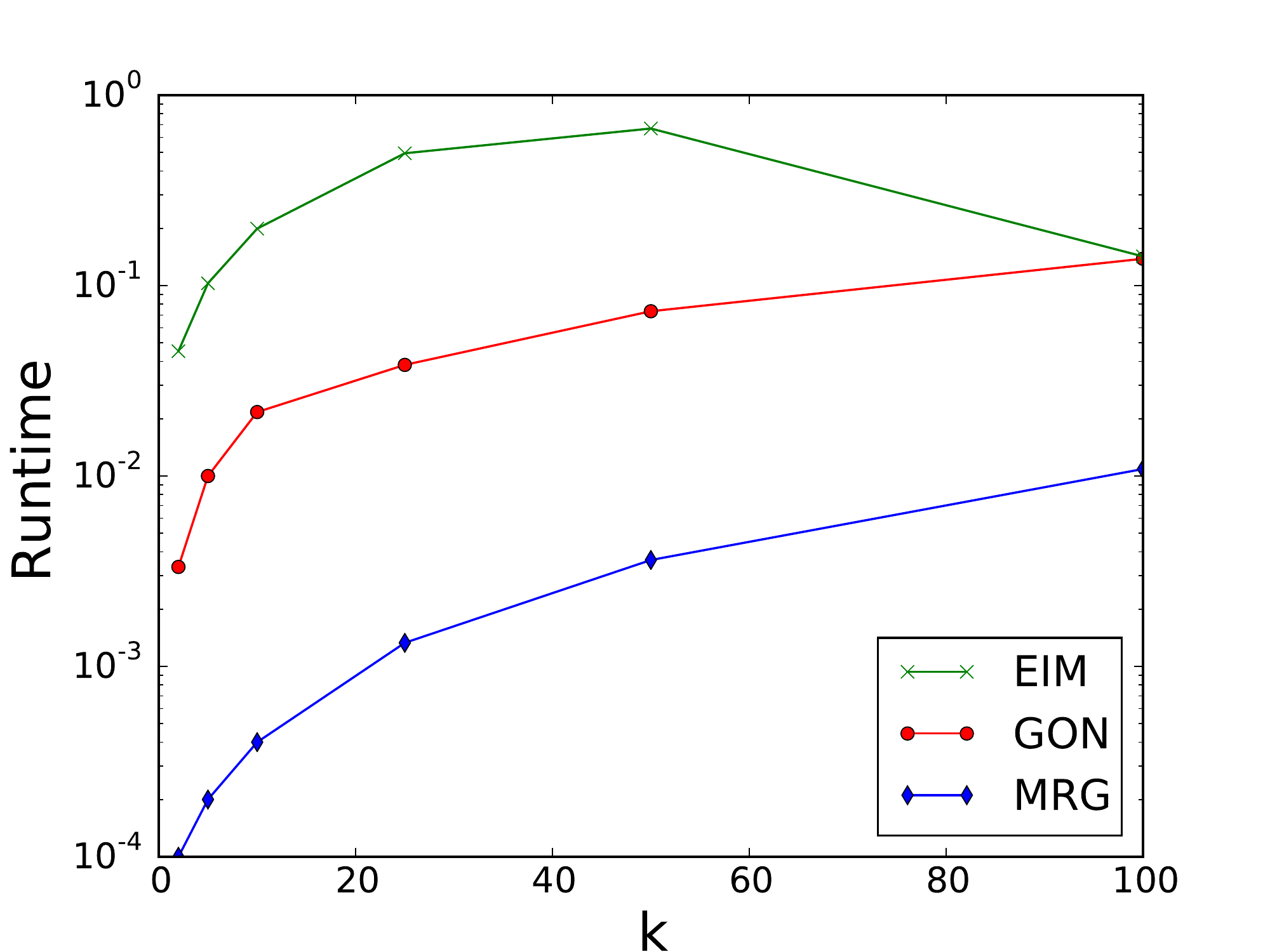}
\label{fig_time_second}}
\caption{\small Runtimes in seconds over a range of values
of~$k$.
Corresponding with our theoretical analysis, \ene runs slower than
both \ours and the sequential alternative, with \ours being the fastest of the algorithms considered.}
\label{fig_time}
\vspace{-4mm}
\end{figure*}

\begin{table}[!t]
\renewcommand{\arraystretch}{1.2}
\caption{\small Solution value over~$k$ for \gauss($n=1,\!000,\!000$, $k'=25$).}
\label{fig_value_first}
\centering
\small{
\begin{tabular}{r|S|S|S}
$k$&\ours\ \ &\ene\ \ &\gon \\ \hline
2  &96.04  &93.11  &95.86  \\
5  &61.90  &61.58  &63.31  \\
10 &41.31  &39.43  &39.72  \\
25  &0.961  &0.854  &0.961 \\
50  &0.762   &0.683  &0.719 \\
100  &0.607   &0.556  &0.573 \\
\end{tabular}
}
\renewcommand{\arraystretch}{1.2}
\caption{\small Solution value over~$k$ for \unif($n=100,\!000$).}
\label{fig_value_second}
\centering
\small{
\begin{tabular}{r|S|S|S}
$k$&\ours&\ene&\gon\\ \hline
2  &91.33  &95.80 &91.18  \\
5  &50.68  &50.65  &53.14  \\
10  &33.35  &31.12  &32.35  \\
25 &18.49  &18.01  &18.27  \\
50 &13.14 &12.39  &12.36  \\
100  &9.144  &8.764  &8.727 \\
\end{tabular}
}
\end{table}
\normalsize

\begin{figure*}[!t]
\centerline 
\hfil
{\subfloat[\gauss($n=1,\!000,\!000$, $k'=50$).]{\includegraphics[width=3.1in]{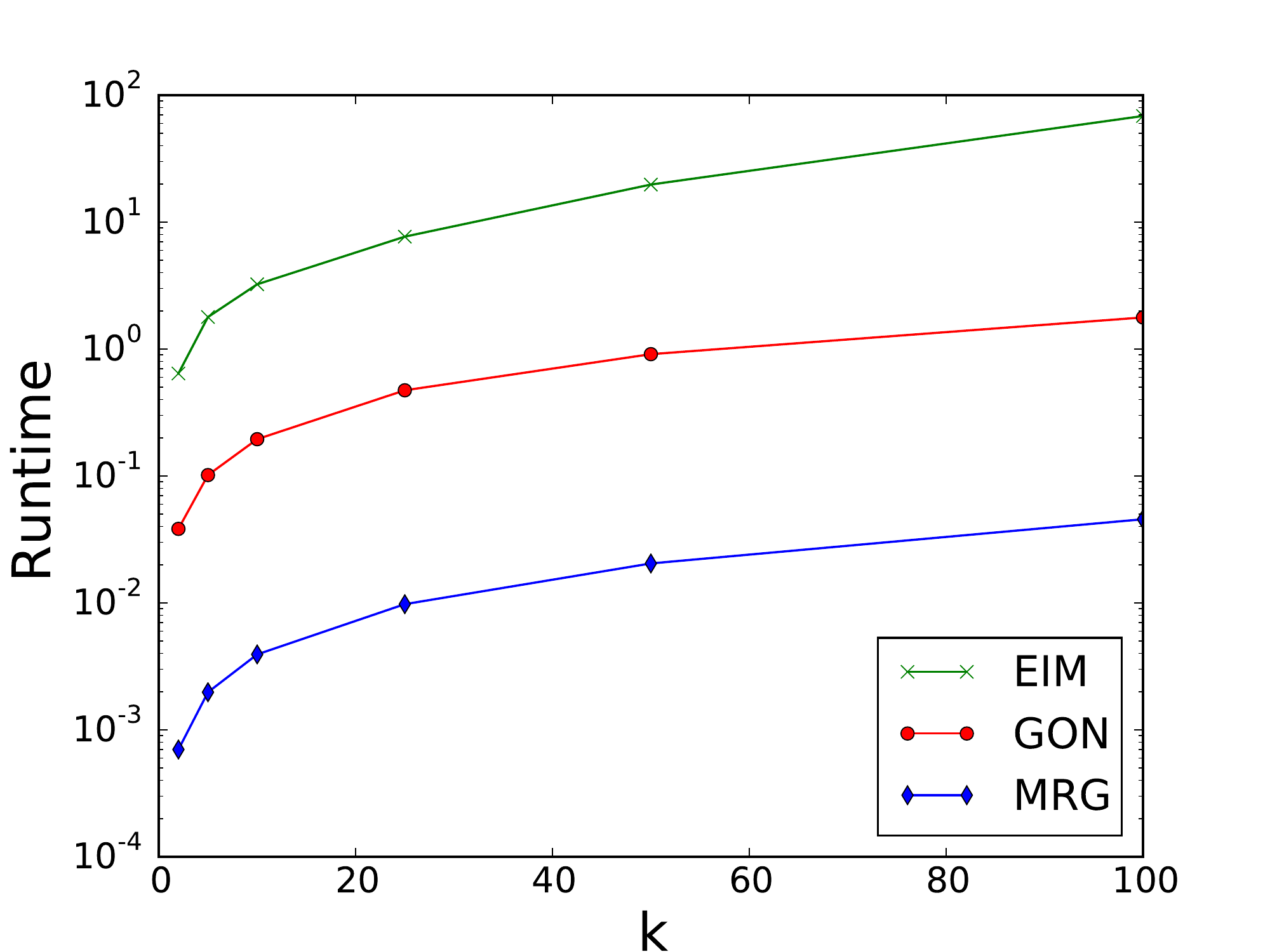}
\label{fig_size_first}}}
\hfil
\subfloat[\gauss($n=50,\!000$, $k'=50$).]{\includegraphics[width=3.1in]{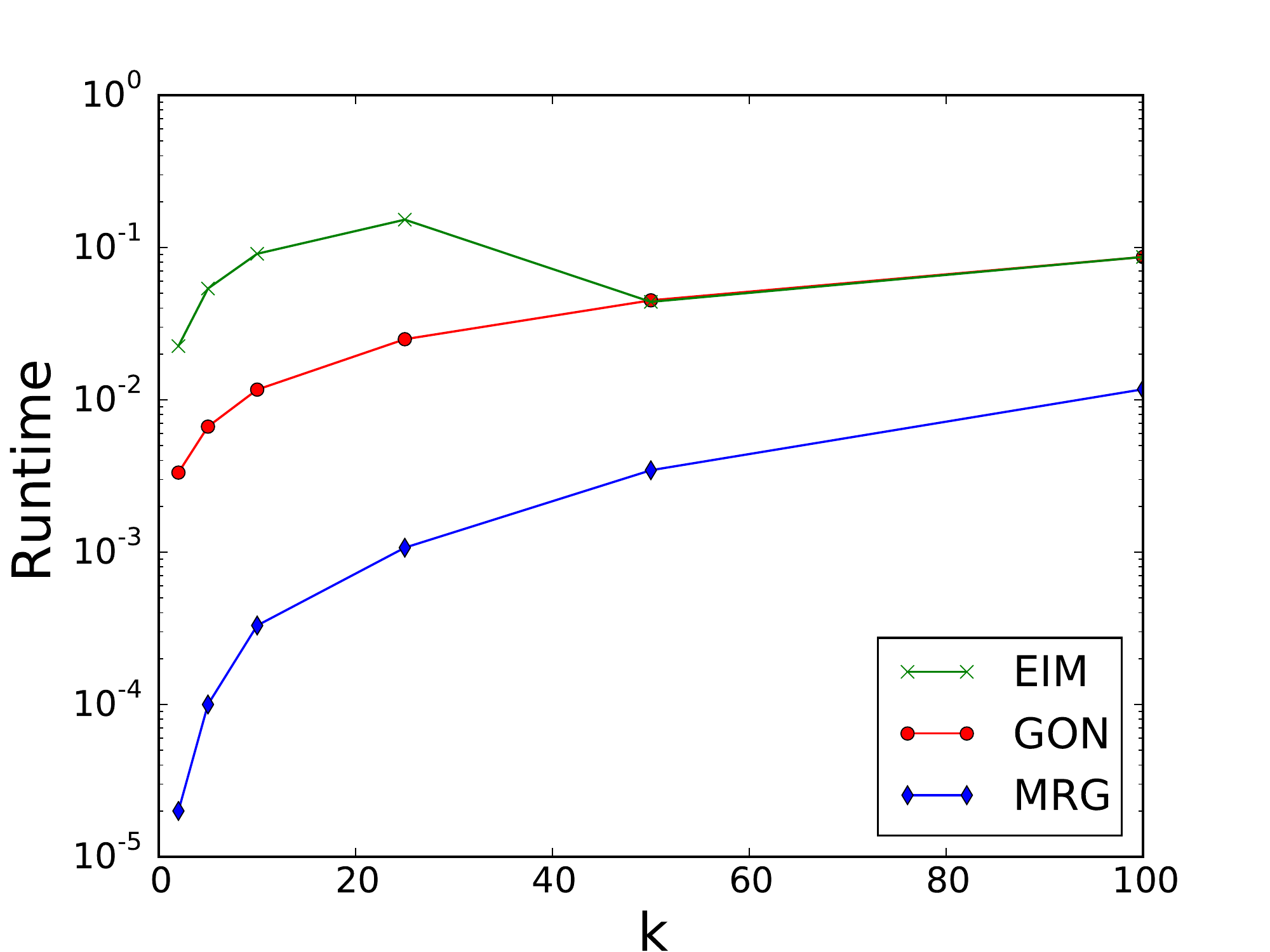}
\label{fig_size_second}}
\caption{\small Runtimes in seconds for \gauss graphs over a range values of~$k$.
When~$k$ becomes too large, relative to~$n$, \ene no longer performs sampling and defaults to the sequential algorithm.
}
\label{fig_size}
\end{figure*}

\begin{table}
\renewcommand{\arraystretch}{1.2}
\vspace{-4mm}
\caption{\small Solution value over~$k$ for \unbalanced($n=200,\!000$,
$k'=25$).
When~$k=k'$,
\ene is notably better.}
\label{compare_clust}
\centering
\small{
\begin{tabular}{r|S|S|S} 
$k$&\ours&\ene&\gon\\ \hline
2  &97.96  &93.69 &93.37  \\
5 &64.61  &64.28  &61.72 \\
10 &40.17 &40.05 &40.39  \\
25  &0.932  &0.828  &0.939 \\
50 &0.668 &0.643  &0.655 \\
100  &0.515  &0.530  &0.500 \\
\end{tabular}
}
\end{table}
\normalsize

\begin{table}[!t]
\renewcommand{\arraystretch}{1.2}
\vspace{-3mm}
\caption{\small Solution value over~$k$ for the \poker data set.}
\label{poker}
\centering
\small{
\begin{tabular}{r|S|S|S} 
$k$&\ours&\ene&\gon\\ \hline
2 &19.41  &18.60  &18.17 \\
5  &18.06   &17.07   &17.25   \\
10  &15.12  &14.20   &15.03   \\
25 &12.13   &11.98   &11.84   \\
50  &10.07  &9.418 &9.617 \\
100  &8.774 &9.241 &8.396 \\
\end{tabular}
}
\end{table}
\normalsize

\vspace{-2mm}
\subsection{Summary}
\vspace{-2mm}
In most cases, despite having worse approximation guarantees, the
solutions for the parallelized algorithms are comparable to
those of the baseline, \gon, with \ene performing slightly better for synthetic data sets.
Ene et al~\cite{ene2011fast} suggested that their sampling-based
algorithm did not perform particularly well, likely
due to the $k$-center procedure being sensitive to outliers. 
Our experimental results show otherwise: sampling fewer points can
occasionally provide \emph{better} results due to the tendency to
avoid sampling points that are well represented,
but toward the edge of the cluster. 
The tendency for \gon to favor outliers is often
 mitigated, rather than amplified, by sampling.
As shown in Table~\ref{compare_clust},
this effect is particularly evident for \gauss graphs where~$k=k'$.

As illustrated in Tables~\ref{fig_value_first} and~\ref{compare_clust},
for the synthetic data sets, the parallel algorithms
are about as effective as Gonzalez's algorithm.
In general, \ene is slightly more effective than \ours.
With the exception of the \ene results on the \kdd 1999~$10\%$ sample, for which it performs poorly,
the same occurs on the real data sets, as seen in Figure~\ref{fig_real} and Table~\ref{poker}.

\subsection{Running time}
For the majority of the experiments, \ene ran using two iterations of the main loop, for a total of seven MapReduce rounds.
On certain data sets,
\ene sometimes executes one iteration, sometimes two --
that is, four or seven
MapReduce rounds
 -- as the number of points removed per round is probabilistic.

From Figures~\ref{fig_time_second},~\ref{fig_size_second}
and~\ref{fig_second_case}, we can see that as the ratio of~$n$ to~$k$
drops, at some point, \ene merely sends the entire data set to a single
machine, rather than employing the sampling procedure.
We can also note that in Figure~\ref{fig_second_case}, \ours displays a different trend from Figure~\ref{fig_first_case}.
In Section~\ref{sec:runtime}, we showed that the runtime is $O(kn/m + k^2 \cdot m)$. 
For larger values of~$k$ and small values of~$n$, the $k^2\cdot m$ term
dominates;
as~$n$ grows, the $k\cdot n/m$ term dominates,
so the trend becomes similar to that in Figure~\ref{fig_first_case}. 
From our analysis in Section~\ref{sec:runtime}, both \ours and \ene have a
round with a $k^2$ term in the running time.
When~$k$ is large relative to~$n$,
this can potentially dominate.

\subsection{Runtime/Approximation Trade-off }

We examine the sensitivity of the \ene algorithm to the~$\phi$ parameter.
As expected, the variability of effectiveness increases
as the~$\phi$ parameter decreases,
while the runtimes significantly decrease. 
Tables~\ref{table_emi_sol} and~\ref{table_emi_run} compare
the average solution value and runtimes for the different parameters. 
The algorithm speeds up significantly 
for values of~$\phi$ below the threshold of~$5.15$ (above which there is a
guaranteed low probability of poor solutions, see Section~\ref{sec:sampling}).
Yet, in practice, it still returns acceptable solutions:
in some cases solutions are even better with smaller values of~$\phi$.

\begin{table}[!t]
\renewcommand{\arraystretch}{1.2}
\caption{%
Average solution value over~$\phi$, in \ene,
for \gauss  ($n=200,\!000$, $k'=25$). For each~$k$, the lowest value is in \emph{italics}.
}
\label{table_emi_sol}
\centering
{
\small
\begin{tabular}{r|S|S|S|S} 
& \multicolumn{3}{c}{$\phi$} \\
\emph{\textbf{$k$}}&\emph{\textbf{ $1$}}  &\emph{\textbf{ $4$}}  &\emph{\textbf{ $6$}}  &\emph{\textbf{$8$}}  \\ \hline
2 & 88.4  & $\mathit{80.4}$\hspace{1.5mm}  &85.5  &86.5  \\
5 & 59.9  &60.9  & $\mathit{56.5}$\  &61.9  \\
10 & 36.2  &35.5 & $\mathit{34.7}$\  &35.3  \\
25 & 0.796   &\hspace{2.5mm} $\mathit{0.780}$  & 0.826 &0.840 \\
50 & 0.630   &0.617   &\hspace{2.5mm} $\mathit{0.610}$  &0.666 \\
100 &\hspace{2.4mm} $\mathit{0.478}$  &0.492   &0.505  &0.535\\
\end{tabular}
}
\vspace{2mm}
\renewcommand{\arraystretch}{1.2}
\caption{%
Average runtime over~$\phi$, in \ene,
for \gauss ($n=200,\!000$, $k'=25$).
The lowest runtime in each row is in \emph{italics}.
}
\label{table_emi_run}
\centering
{
\small
\begin{tabular}{r|S|S|S|S} 
& \multicolumn{3}{c}{$\phi$} \\
\emph{\textbf{$k$}}&\emph{\textbf{$1$}}  &\emph{\textbf{$4$}} 
&\emph{\textbf{$6$}} &\emph{\textbf{$8$}}  \\ \hline
2& $\mathit{0.050}$ &0.059 &0.165 &0.135 \\
5& $\mathit{0.080}$ &0.130 &0.368 &0.314 \\
10& $\mathit{0.283}$ &0.480 &0.549 &0.552 \\
25& \hspace{-.4mm}0.588 &\hspace{3.4mm}$\mathit{0.505}$ &1.47   &1.42   \\
50& $\mathit{0.693}$ &0.816 &2.84  &2.24   \\
100&$\mathit{0.726}$ &0.757 &3.78   &3.59    \\
\end{tabular}
}
\end{table}

This seemingly counterintuitive behavior can be explained by the choice
of \gon as the sub-procedure for the sample. 
As noted above,
selecting the farthest points as new centers makes it more likely that points
at the perimeter of a cluster are chosen; although each cluster is well
represented by some vertex, the selected center is at the perimeter of the cluster. 
By sampling \emph{fewer} points, it is less likely that points that are
extremal to the cluster are present in the subgraph on which
\gon is run.
Therefore in decreasing the runtime of the algorithm,
for appropriate values of~$k$,
we can potentially improve the average value of the solutions obtained.
However this behavior is likely to be more volatile:
the guaranteed bound on the performance has lower probability,
giving a higher chance that a very poor solution is returned.

\section{Conclusion}
In this paper, we describe a multi-round parallel procedure for the~$k$-center
problem.
When it runs in only~$2$ MapReduce rounds, it is  a $4$-approximation. 
We show experimentally that it returns solutions that compare well
to those of a sequential $2$-approximation algorithm, while running extremely
fast. 

We compare this approach to the existing $10$-approximation
sampling-based MapReduce procedure~\cite{ene2011fast}.
It is often slightly more effective, but can be very slow.
To support our experimental results, we give the first detailed runtime analysis for the sampling approach, the proof of which correspond with our empirical results.
We also parameterized the
sampling procedure to improve runtimes, sometimes even bringing better
solutions despite the lack of a provable effectiveness bound.

\begin{figure*}[!t]
\vspace{-7mm}
\centerline 
\hfil
{\subfloat[$k=10$]{\includegraphics[width=3.1in]{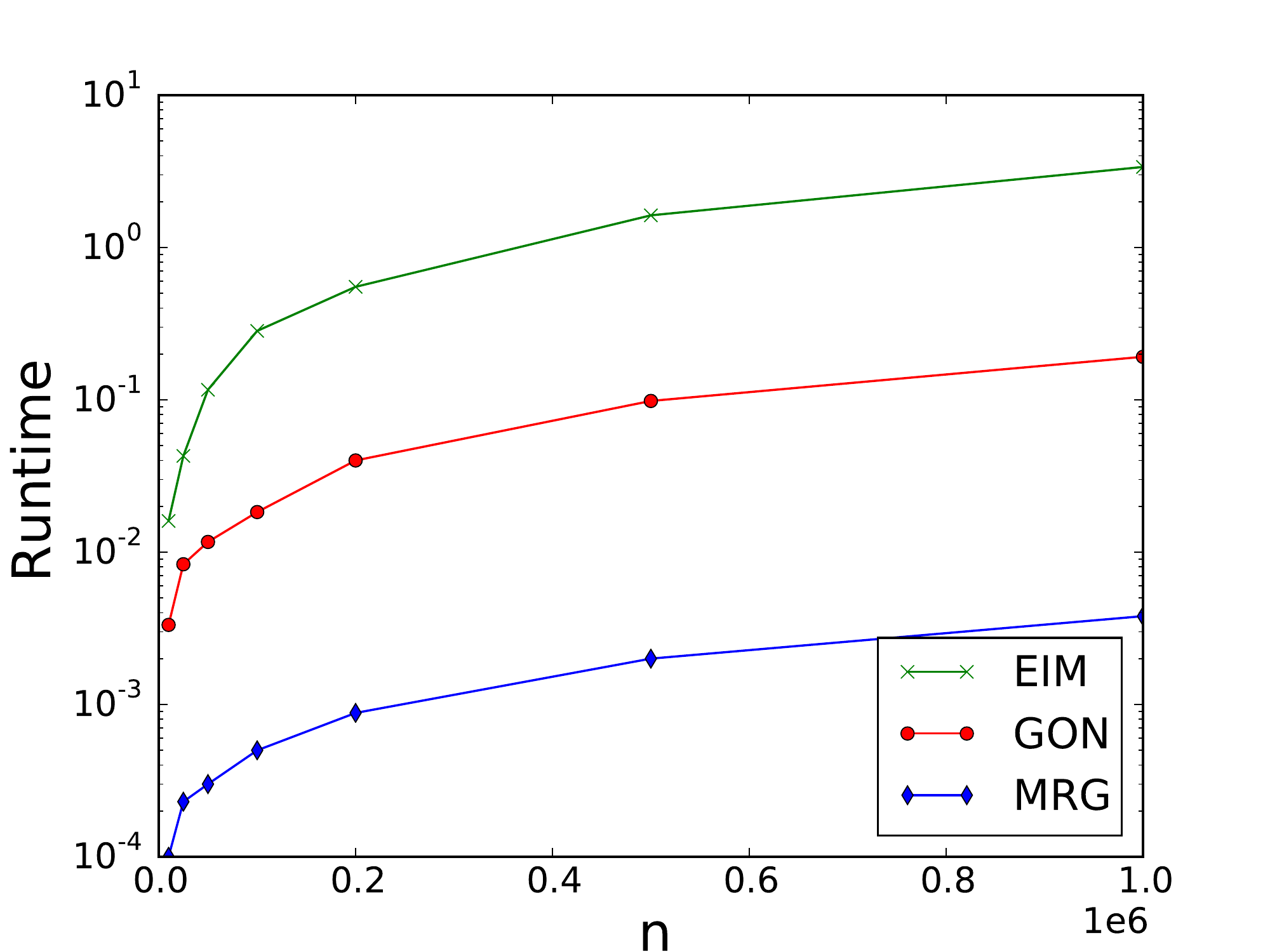}
\label{fig_first_case}}}
\hfil
\subfloat[$k=100$]{\includegraphics[width=3.1in]{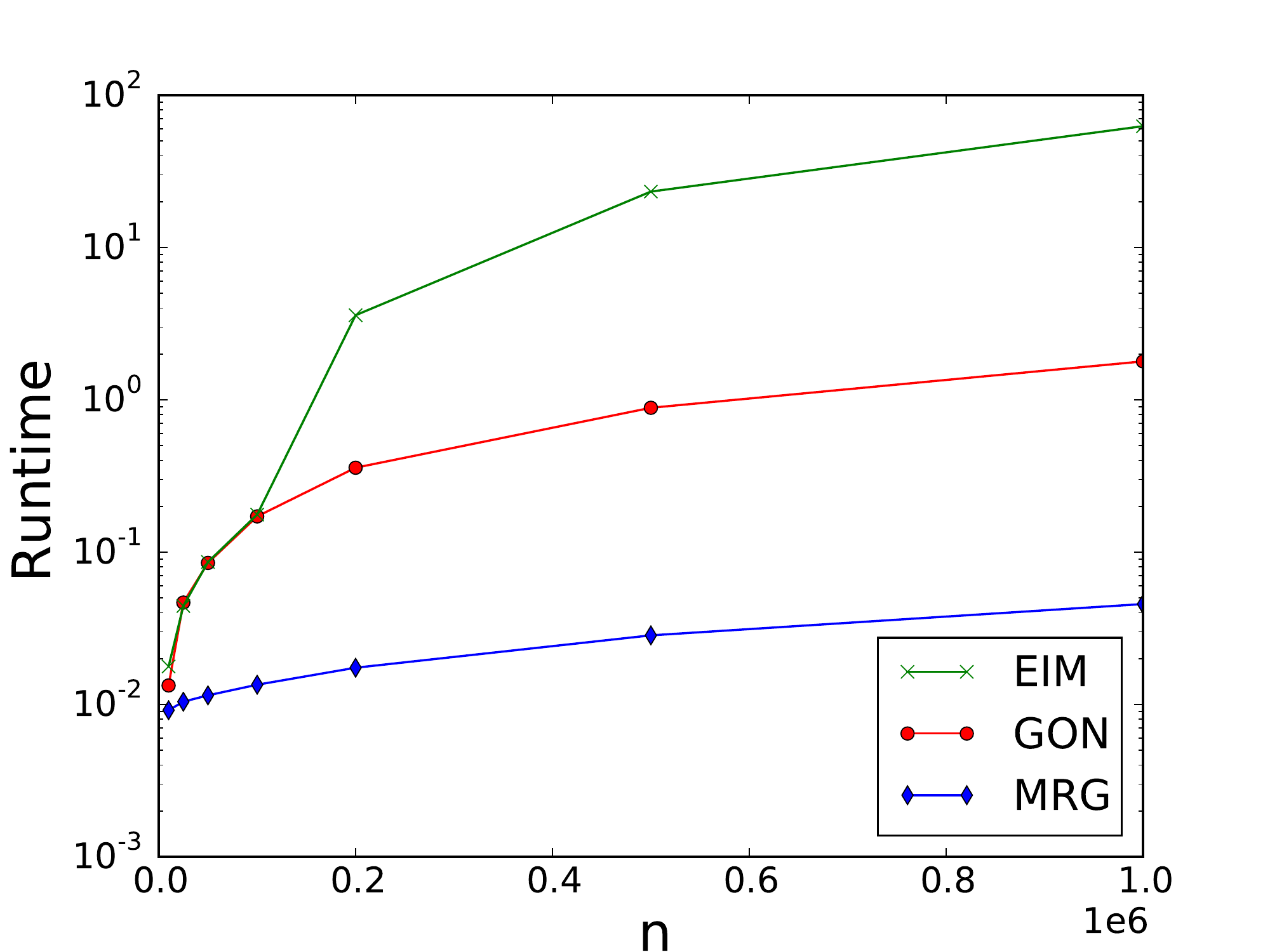}
\label{fig_second_case}}
\caption{\small Runtimes in seconds for fixed~$k$ over values of~$n$ ranging
from $10,\!000$ to $1,\!000,\!000$.
For sufficiently small values of~$n$ relative to~$k$, \ene behaves identically to \gon.
The is caused by the condition on the while loop: if~$k$ is large enough, the condition is never met and no sampling occurs,
so \gon is run on the entire data set.%
}
\label{fig_sim}
\end{figure*}

\subsection*{Future work}
The approximation factor of four for \ours is tight.
There are graphs on which,
with adversarial  assignment of points to machines and choice of seedings
for \gon,
 \ours gives a $4$-approximation.
How likely such cases are in practice?
We seek bounds on the probability that this algorithm gives a poor approximation.
And what is the effectiveness when \ours needs more than two rounds?

Recently, Im and Moseley~\cite{im2015} described a
$3$-round $2$-approximation MapReduce procedure for the $k$-center problem
under the assumption that $\mathit{OPT}$ is known, and announced a $4$-round procedure that does not require prior knowledge of the optimal solution -- 
these details have yet to appear.
More recently Malkomes et al.~\cite{malkomes2015fast} presented a parallel adaptation of the $k$-center algorithm comparable to a special case of our approach.
Currently all such approaches rely on the sequential algorithm of Gonzalez~\cite{gonzalez1985clustering}. 
It would be interesting to compare with similar adaptations of alternative sequential algorithms, such as that of Hochbaum \& Shmoys~\cite{hochbaum1985best}.

\vspace{3mm}
\paragraph*{Acknowledgment}
Many thanks to Kewen Liao for valued feedback and proofreading.

\newpage

\bibliographystyle{siam}
\bibliography{references}

\end{document}